\def\ssi{\leftrightarrow}
\def\imp{\rightarrow}
\def\A{\mathcal A}
\def\B{\mathcal B}
\def\I{\mathcal I}
\def\C{\mathcal C}
\def\D{\mathcal D}
\newtheorem{definition}{Definition}[section]
\newtheorem{theorem}[definition]{Theorem}
\newtheorem{lemma}[definition]{Lemma}
\newtheorem{proposition}[definition]{Proposition}
\newtheorem{remark}[definition]{Remark}
\newtheorem{remarks}[definition]{Remarks}
\newtheorem{example}[definition]{Example}
\newtheorem{corollary}[definition]{Corollary}
\newenvironment{proof}{\noindent\bf Proof. \rm}{\hfill $\blacksquare$}
\newcounter{defcounter}
\newenvironment{myequation}{%
\addtocounter{equation}{-1}
\refstepcounter{defcounter}

\begin{equation}}
{\end{equation}}
\newcommand{\mm}{\mathcal} 
\def\um{\mathbf{\frac 1 2}}
\def\0{\mathbf{0}}
\def\1{\mathbf{1}}
\newcommand{\lfi}{{\bf LFI}}
\newcommand{\lfis}{{\bf LFI}s}
\newcommand{\cpl}{{\bf CPL}}
\newcommand{\iplp}{${\bf IPL}^+$}
\newcommand{\mbc}{{\bf mbC}}
\newcommand{\mbcciw}{{\bf mbCciw}}
\newcommand{\cbr}{{\bf Cbr}}
\newcommand{\cie}{{\bf Cie}}
\newcommand{\rcbr}{{\bf RCbr}}
\newcommand{\rcie}{{\bf RCie}}
\newcommand{\rci}{{\bf RCi}}
\newcommand{\rmbc}{{\bf RmbC}}
\newcommand{\rmbcciw}{{\bf RmbCciw}}
\newcommand{\cons}{\ensuremath{{\circ}}}
\newcommand{\sneg}{\ensuremath{{\sim}}}
\newcommand{\bneg}{\ensuremath{{-}}}
\begin{document}

\title{Paraconsistent Belief Revision: A Replacement-Enriched LFI for Epistemic Entrenchment}

\author{
  Marcelo E. Coniglio $^\textup{\scriptsize a,b}$
    \and
    Mart\'{\i}n Figallo
  $^\textup{\scriptsize c}$
    \and
    Rafael R. Testa
  $^\textup{\scriptsize a}$
}

\date{
    $^\textup{\scriptsize a}$\textit{\small Centre for Logic, Epistemology and the History of Science (CLE), State University of Campinas (Unicamp), Campinas, Brazil}
    \\
    $^\textup{\scriptsize b}$\textit{\small Institute of Philosophy and the Humanities (IFCH), Unicamp, Campinas, Brazil}
        \\
    $^\textup{\scriptsize c}$\textit{\small Departamento de Matem\'atica. Universidad Nacional del Sur. Bah\'{\i}a Blanca, Argentina}
   \\
   {\small e-mails: \texttt{coniglio@unicamp.br}; \ \  \texttt{figallomartin@gmail.com}; \ \  \texttt{rafaeltesta@gmail.com}}
}

\maketitle

\begin{abstract}
We further develop the formal foundations of Paraconsistent Belief Revision (PBR) by introducing Logics of Formal Inconsistency (\lfis) specifically designed to support the development of epistemic entrenchment-based models for belief change.  The interpretation of formal consistency -- and, more broadly, of paraconsistency -- in terms of the epistemic attitudes adopted by rational agents and of these agents reasoning with potentially contradictory yet non-trivial epistemic states, respectively, is already well-established within the literature on PBR based on LFIs. However, previous approaches faced a key limitation: the absence of replacement in most LFIs prevented the construction of entrenchment-based operations. We address this gap by first revisiting and systematizing core properties essential for such modeling, formalizing them within \cbr, a previously introduced logic whose foundational properties we now examine and develop in depth. Building on this, we introduce \rcbr, a replacement-enriched, self-extensional extension of \cbr, which makes it possible -- within an \lfi-based framework -- to formally define epistemic entrenchment and to construct entrenchment-based belief revision mechanisms. This development enables a fully constructive approach to Belief Revision in paraconsistent settings, further advancing the theoretical treatment of \lfis\ and paraconsistency within the broader landscape of epistemic states and belief dynamics.
\end{abstract}

\section{Introduction}
The \emph{Logics of Formal Inconsistency} (\lfis) provide a robust framework for addressing contradictions in logical systems without collapsing into triviality \cite{car:con:16}. Mostly advanced in \cite{car.con.mar.2007}, \lfis\ form a specific family of supraclassical paraconsistent logics that serve as an umbrella term encompassing various well-established systems in the literature. These logics allow the introduction of a formal \emph{consistency operator} ``$\circ$'' to distinguish between consistent and inconsistent sentences, which plays a crucial role by ensuring that certain formulas behave classically (as clarified below), while still allowing contradictions to coexist elsewhere in the system. This ability to localize and manage contradictions is particularly valuable in dynamic environments, where agents must process potentially conflicting information without compromising the integrity of their belief systems \cite{Testa2014}.

Two systems exist in the literature of Paraconsistent Belief Revision (PBR) based on \lfis, namely AGMp and AGM$\circ$~\cite{tes.con.rib.2016}. These systems extend the classical AGM framework~\cite{agm1985} to paraconsistent settings, adapting its principles to handle contradictions without trivializing the belief set. While AGMp builds on the AGM-compliance of \lfis, thereby preserving the fundamental constructions of the original belief dynamics model with only minor reinterpretations and refinements (within the broader context of Belief Revision in non-classical logics~\cite{wassermann,ribeiro}), AGM$\circ$ explicitly incorporates the consistency operator in the constructions by introducing new epistemic attitudes -- namely, {\em strong acceptance} (both $\circ\alpha$  and $\alpha$ belong to the belief set $K$) and {\em strong rejection}  (both $\circ\alpha$  and $\neg\alpha$ belong to the belief set $K$) of a belief-representing sentence $\alpha$ (alongside the usual epistemic attitudes). In AGM$\circ$, the consistency of a sentence $\alpha$, expressed by $\circ\alpha$, ensures that $\alpha$, when accepted, cannot be retracted from the belief set through contraction unless $\circ\alpha$ itself is first removed. This mechanism guarantees the protection of strongly accepted sentences -- those that are both accepted and consistent -- thereby structuring the belief revision process around the preservation of core beliefs. Despite this innovation, AGMp and AGM$\circ$ do not address the ranking of beliefs based on their resistance to change, as encapsulated in the concept of \emph{epistemic entrenchment}.

A symmetry exists between the underlying concept of irrevocability of strongly accepted sentences in AGM$\circ$ and the notion of \emph{entrenchment}, where strongly accepted sentences could naturally correspond to the most entrenched beliefs. However, this symmetry has remained unexplored due to the inability to define \emph{epistemic entrenchment} in PBR systems based on \lfis. This limitation stems from the absence of the \emph{replacement property} in most \lfis. This property guarantees that if two formulas are logically equivalent, one can replace the other in any context without affecting derivability or inferential behavior. From this, the lack of other expected properties--such as the preservation of equivalences involving the consistency operator--hinders the formulation of a coherent framework for \emph{epistemic entrenchment} in  paraconsistent AGM systems based on \lfis.

To address these challenges, this paper introduces two paraconsistent logics in the family of \lfis: \textbf{Cbr} and \textbf{RCBr}. The logic \textbf{Cbr} (already defined in~\cite{tes.con.rib.2016}, where its main properties were presented but without a systematic or dedicated treatment) is specifically designed to support belief revision in AGM$\circ$ by incorporating properties that are essential for the formal characterization of epistemic attitudes such as strong acceptance and strong rejection. In particular, the consistency operator $\circ$ satisfies $\circ \alpha \equiv \circ \neg \alpha$, reflecting a symmetry in the evaluation of consistency for a formula and its negation.  Since $\alpha \equiv \neg\neg\alpha$ in  \textbf{Cbr}, it follows that, for every sentence $\alpha$ and belief set $K$:   $\alpha$ is strongly accepted in $K$  if and only if $\neg\alpha$ is strongly rejected in $K$; and  $\alpha$ is strongly rejected in $K$ if and only if $\neg\alpha$ is strongly accepted in $K$. Both properties concerning epistemic attitudes are very natural and expected. Another interesting feature of   \textbf{Cbr} is that $\circ$ preserves logical equivalence under certain reasonable assumptions, namely: if $\alpha \equiv \beta$ and $\neg \alpha \equiv \neg \beta$, then $\circ \alpha \equiv \circ \beta$. These properties ensure that $\circ$ behaves coherently with respect to equivalences and provides a suitable foundation for modeling strong epistemic attitudes in belief revision systems.

Building on \cbr, we further introduce \rcbr, an extension that incorporates the \emph{replacement property} while maintaining the paraconsistent nature of the system. By resolving the limitations of previous \lfis, \rcbr\ allows for the substitution of logically equivalent formulas, supporting the construction of robust \emph{epistemic entrenchment} frameworks.

\section{Formal Background}\label{background}

As is usual in Belief Revision theory, little is required from the underlying language. Classically, the framework assumes a formal language closed under the usual truth-functional connectives. Additionally, we consider a consequence relation $Cn$ that satisfies the standard Tarskian properties:
\begin{itemize}
    \item \textbf{Reflexivity:} $\varphi \in Cn(\Gamma)$ for any $\varphi \in \Gamma$.
    \item \textbf{Monotonicity:} If $\varphi \in Cn(\Gamma)$, then $\varphi \in Cn(\Gamma \cup \Delta)$ for any $\Delta$.
    \item \textbf{Transitivity:} If $\psi \in Cn(\Gamma)$ and $\varphi \in Cn(\Gamma \cup \{\psi\})$, then $\varphi \in Cn(\Gamma)$.
\end{itemize}

In addition to the Tarskian properties, the consequence relation $Cn$ is assumed to satisfy the following additional properties:
\begin{itemize}
    \item \textbf{Supraclassicality:} If $\Gamma \models \varphi$ in classical logic, then $\varphi \in Cn(\Gamma)$.
    \item \textbf{Compactness:} If $\varphi \in Cn(\Gamma)$, then there exists a finite subset $\Gamma_0 \subseteq \Gamma$ such that $\varphi \in Cn(\Gamma_0)$.
    \item \textbf{Deduction:} For any $\Gamma$ and formulas $\varphi$ and $\psi$, $\psi \in Cn(\Gamma \cup \{\varphi\})$ if and only if $\varphi \to \psi \in Cn(\Gamma)$.
\end{itemize}

As noted by the AGM trio \cite{agm1985}, the underlying logical framework can -- and indeed should -- be extended or adapted according to the intended applications of the belief revision system. This flexibility is especially important given that belief states in the AGM framework are represented by logically closed sets -- a useful but idealized abstraction, since real agents may not always infer all logical consequences of their beliefs. Moreover, the choice of the consequence operator $Cn$ does not commit one to a specific logic: although it typically contains classical truth-functional logic, it may include additional principles (as it is observed by \cite{ferme.hansson}), depending on the epistemological and inferential goals of the belief change system.

In line with this perspective, we introduce a specific \lfi\ designed to address the challenges inherent in paraconsistent belief revision. This logic is presented semantically via a non-deterministic matrix (Nmatrix) framework, which enhances its expressive capabilities and yields a straightforward decision procedure. The formal development of this \lfi, including its semantic foundations and key properties, is carried out throughout the article.

\section{Some \lfis\ suitable for paraconsistent  belief revision}
The present section introduces the formal systems that serve as the logical foundations for our belief revision framework. While all definitions, results, and semantic tools required for the developments in this paper are self-contained and explicitly stated here, we refer the interested reader to \cite{car:con:16} and \cite{car.con.mar.2007} for a broader and more detailed exposition of the theory of \lfis). These works provide both a historical and technical overview of \lfis, including their algebraic semantics, axiomatic families, and motivations from the standpoint of paraconsistency. Here, we restrict ourselves to the essential formal components needed to define and analyze the logics \cbr\ and \rcbr, beginning with their syntactic foundations and basic semantic structure.

Consider the propositional signature $\Sigma=\{\vee, \wedge, \rightarrow, \neg, \circ\}$ and denote by $\mathfrak{Fm}$ the algebra of formulas generated by a denumerable set of propositional variables over the signature $\Sigma$. As usual, $\alpha\ssi\beta$ is an abbreviation of $(\alpha\imp\beta)\wedge(\beta\imp\alpha)$.  Let \cbr\ be the logic given by the folowing Hilbert-style calculus:

\

{\bf Axioms}

\vspace{-0.3cm}
\begin{myequation}
  \alpha\imp(\beta\imp\alpha)
   \end{myequation}
\vspace{-0.5cm}
\begin{myequation}
  \big(\alpha\imp(\beta\imp\gamma)\big)\imp\big((\alpha\imp\beta)\imp(\alpha\imp\gamma)\big)
   \end{myequation}
\vspace{-0.5cm}
\begin{myequation}
  \alpha\imp(\beta \imp (\alpha\wedge\beta))
   \end{myequation}
\vspace{-0.5cm}
\begin{myequation}
  (\alpha\wedge\beta)\imp\alpha
   \end{myequation}
\vspace{-0.5cm}
\begin{myequation}
  (\alpha\wedge\beta)\imp\beta
   \end{myequation}
\vspace{-0.5cm}
\begin{myequation}
  \alpha\imp(\alpha\vee\beta)
   \end{myequation}
\vspace{-0.5cm}
\begin{myequation}
  \beta\imp(\alpha\vee\beta)
   \end{myequation}
   \vspace{-0.5cm}
\begin{myequation}
  \big(
\alpha\imp\gamma
\big)
\imp
\big(
(\beta\imp\gamma)\imp((\alpha\vee\beta)\imp\gamma)\big
)
   \end{myequation}
\vspace{-0.5cm}
\begin{myequation}
  \big(\alpha\imp\beta)\vee\alpha
   \end{myequation}
\vspace{-0.5cm}
\begin{myequation}
\alpha\vee\neg\alpha
   \end{myequation}
\vspace{-0.5cm}
\begin{myequation}
{\circ}\alpha\imp\big(\alpha\imp(\neg\alpha\imp\beta)\big)
   \end{myequation}
\vspace{-0.5cm}
\begin{myequation}
{\circ}\alpha\vee(\alpha\wedge\neg\alpha)
   \end{myequation}
\vspace{-0.5cm}
\begin{myequation}
 \alpha\imp{\neg}{\neg}\alpha
   \end{myequation}
   \vspace{-0.5cm}
\begin{myequation}
{\neg}{\neg}\alpha\imp\alpha
   \end{myequation}

\noindent {\bf Inference Rule}

$$\mbox{(MP)} \hspace{.3cm} \displaystyle \frac{\alpha \hspace{.5cm} \alpha\imp\beta}{\beta} $$

\

\noindent
Finally, let \cie\ be the logic obtained from \cbr\ by replacing axiom ({\bf Ax}12) by the following:
\begin{myequation}
\neg{\circ}\alpha\to (\alpha\wedge\neg\alpha)
   \end{myequation}

\noindent
The notion of formal proof in \cbr\ and \cie\ is the usual one. We write $\Gamma \vdash_{\cbr} \varphi$ ($\Gamma \vdash_{\cie} \varphi$, resp.) to indicate that there is a formal proof in \cbr\ (in \cie, resp.) of $\varphi$ from the set of premises  $\Gamma$.

\begin{remarks}  \label{rems-Ci} (1) It is worth noting that axioms ({\bf Ax}1)-({\bf Ax}11) plus (MP) constitute a Hilbert calculus for \mbc, the minimal \lfi\ considered in~\cite{car.con.mar.2007}. In~\cite{car:con:16}, axiom ({\bf Ax}12) was called ({\bf cwi}), and the \lfi\ obtained by  axioms ({\bf Ax}1)-({\bf Ax}12) plus (MP) (i.e., \mbc\ + ({\bf Ax}12)) was called \mbcciw.\\[1mm]
(2) By~\cite[Proposition~3.1.10]{car:con:16} it is known that, in the presence of \mbc, axiom  ({\bf Ax}15) is equivalent to  ({\bf Ax}12) plus ({\bf cp1}): ${\circ}{\circ}\alpha$.
\end{remarks}

By using the general techniques for generation of Nmatrices as swap structures considered in~\cite{con:24}, and taking into account that \cbr\ and \cie\ can be obtained by means of the axioms considered in Section~6.4 of that paper, we get the following:

\begin{definition}\label{DefMatCbr}
Let $\mm M_{\cbr}$ be the  three-valued Nmatrix $\langle \mm T,  \mm D, \{\hat\wedge,\hat\vee,\hat\rightarrow, \hat\neg, \hat\circ \}\rangle$ over the signature $\Sigma$ with domain $\mm T=\{\1, \um, \0\}$ and set of
designated values $\mm D=\{\1, \um\}$ and  such that the truth-tables associated to each connective are the following:

\begin{center}
\begin{tabular}{| c || c | c | c | }
\hline 
$\hat\wedge$ & $\1$ & $\um$ & $\0$ \\ \hline \hline
$\1$ & $\mm D$ & $\mm D$ & $\{\0\}$\\ \hline
$\um$ & $\mm D$ & $\mm D$ & $\{\0\}$\\ \hline
$\0$ & $\{\0\}$ & $\{\0\}$ & $\{\0\}$\\ \hline
\end{tabular} \,
\begin{tabular}{| c || c | c | c | }
\hline
$\hat\vee$ & $\1$ & $\um$ & $\0$ \\ \hline \hline
$\1$ & $\mm D$ & $\mm D$ & $\mm D$\\ \hline
$\um$ & $\mm D$ & $\mm D$ & $\mm D$\\ \hline
$\0$ & $\mm D$ & $\mm D$ & $\{\0\}$\\ \hline
\end{tabular} \, 
\begin{tabular}{| c || c | c | c | }
\hline
$\hat\rightarrow$ & $\1$ & $\um$ & $\0$ \\ \hline \hline
$\1$ & $\mm D$ & $\mm D$ & $\{\0\}$\\ \hline
$\um$ & $\mm D$ & $\mm D$ & $\{\0\}$\\ \hline
$\0$ & $\mm D$ & $\mm D$ & $\mm D$\\ \hline
\end{tabular} \,
\begin{tabular}{| c || c || c |}
\hline
 & $\hat\neg$ & $\hat\circ$  \\ \hline\hline
$\1$ & $\{\0\}$ & $\D$ \\ \hline
$\um$ & $\{\um\}$ & $\{\0\}$\\ \hline
$\0$ & $\{\1\}$ & $\D$ \\ \hline
\end{tabular}
\end{center}

\

\noindent The three-valued Nmatrix $\mm M_{\cie}$ for \cie\ is obtained from $\mm M_{\cbr}$ by replacing $\hat{\cons}$ by the following:

$$\begin{tabular}{| c || c |}
\hline
 & $\hat\circ$  \\ \hline\hline
$\1$ & $\{\1\}$ \\ \hline
$\um$ & $\{\0\}$\\ \hline
$\0$ & $\{\1\}$ \\ \hline
\end{tabular}$$
\end{definition}

\noindent Let ${\bf L} \in \{\cbr, \cie\}$.
The notion of valuation is the usual in Nmatrices, namely: a function $v:\mathfrak{Fm} \to \mm T$ is a valuation in $\mm M_{\bf L}$ if $v(\alpha \# \beta) \in v(\alpha) \hat{\#} v(\beta)$ for $\# \in \{\land,\vee,\to\}$, and $v(\#\alpha) \in \hat{\#}v(\alpha)$ for $\# \in \{\neg,\circ\}$. We say that $\varphi$ is a consequence of $\Gamma$ in {\bf L},  and we write $\Gamma \models_{\mm M_{\bf L}} \varphi$, if for every valuation $v$ in the Nmatrix $\mm M_{\bf L}$ it holds: if $v(\gamma)\in \mm D$ for every $\gamma \in \Gamma$ then $v(\varphi)\in \mm D$.

\begin{theorem} (Soundness and completeness of \cbr\ and \cie\  w.r.t. Nmatrix semantics)\\ 
Let ${\bf L} \in \{\cbr, \cie\}$, and let $\Gamma\cup\{\varphi\} \subseteq \mathfrak{Fm}$. Then:
$\Gamma \vdash_{\bf L} \varphi$ \ iff \ $\Gamma \models_{\mm M_{\bf L}} \varphi$.
\end{theorem}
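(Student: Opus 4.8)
The plan is to prove the two implications separately by a standard Lindenbaum--Asser argument for Hilbert calculi complete with respect to finite non-deterministic matrices; alternatively, the statement follows from the general soundness and completeness theorem for swap-structure semantics in \cite{con:24}, since, as noted above, \cbr\ and \cie\ are axiomatized by schemas covered in Section~6.4 of that paper. Here I sketch the self-contained route.

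\emph{Soundness} ($\Rightarrow$). The argument is by induction on the length of a derivation of $\varphi$ from $\Gamma$. One checks, using the truth-tables of Definition~\ref{DefMatCbr}, that every instance of ({\bf Ax}1)--({\bf Ax}14) --- and of ({\bf Ax}15) in the case of \cie\ --- receives a designated value under every valuation of $\mm M_{\bf L}$ (quantifying over all outputs allowed by the non-deterministic tables), and that (MP) preserves designated values, i.e. $v(\alpha)\in\mm D$ and $v(\alpha\imp\beta)\in\mm D$ imply $v(\beta)\in\mm D$. These are finite case analyses; the only mildly delicate instances are ({\bf Ax}11), ({\bf Ax}12)/({\bf Ax}15) and ({\bf Ax}13)--({\bf Ax}14), which mirror the entries of $\hat\neg$ and $\hat\cons$.

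\emph{Completeness} ($\Leftarrow$). Assume $\Gamma\not\vdash_{\bf L}\varphi$. Since $\bf L$ is Tarskian and finitary and ({\bf Ax}1), ({\bf Ax}2) and (MP) yield the Deduction Metatheorem, a Lindenbaum-type construction extends $\Gamma$ to a set $\Delta\supseteq\Gamma$ that is maximal with the property $\Delta\not\vdash_{\bf L}\varphi$; such $\Delta$ is a closed theory ($\psi\in\Delta$ iff $\Delta\vdash_{\bf L}\psi$), and, by the Deduction Metatheorem together with the proof-by-cases axiom ({\bf Ax}8), it is \emph{prime}: $\alpha\vee\beta\in\Delta$ iff $\alpha\in\Delta$ or $\beta\in\Delta$. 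Define $v:\mathfrak{Fm}\to\mm T$ by $v(\psi)=\1$ if $\psi\in\Delta$ and $\neg\psi\notin\Delta$, $v(\psi)=\um$ if $\psi\in\Delta$ and $\neg\psi\in\Delta$, and $v(\psi)=\0$ if $\psi\notin\Delta$; clearly $v$ is total and $v(\psi)\in\mm D$ iff $\psi\in\Delta$. One then verifies that $v$ is a valuation in $\mm M_{\bf L}$: the clauses for $\wedge$ and $\vee$ follow from ({\bf Ax}3)--({\bf Ax}8) and primeness; the clause for $\imp$ from the Deduction Metatheorem and ({\bf Ax}9) (this is exactly where $(\alpha\imp\beta)\vee\alpha$ is used, to force $\alpha\imp\beta\in\Delta$ when $\alpha\notin\Delta$); the clause for $\hat\neg$ from ({\bf Ax}10), ({\bf Ax}13), ({\bf Ax}14); and the clause for $\hat\cons$ from ({\bf Ax}11), ({\bf Ax}12) --- and ({\bf Ax}15) plus Remark~\ref{rems-Ci}(2) in the case of \cie\ --- always invoking $\varphi\notin\Delta$. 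Since $\Gamma\subseteq\Delta$ we get $v(\gamma)\in\mm D$ for all $\gamma\in\Gamma$ while $v(\varphi)=\0\notin\mm D$, hence $\Gamma\not\models_{\mm M_{\bf L}}\varphi$.

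\emph{Main obstacle.} The crux is the verification that the canonical $v$ respects the tables for $\hat\neg$ and $\hat\cons$: one must see that ({\bf Ax}10)--({\bf Ax}15) are precisely strong enough to reconstruct the three-valued behaviour of negation and of the consistency operator --- in particular that $v(\circ\alpha)\in\mm D$ whenever $\alpha\in\Delta,\ \neg\alpha\notin\Delta$ or whenever $\alpha\notin\Delta$, and $v(\circ\alpha)=\0$ whenever $\alpha,\neg\alpha\in\Delta$ (this last point using ({\bf Ax}11) and the maximality of $\Delta$ to avoid deriving $\varphi$). Establishing primeness of $\Delta$, and, for \cie, deriving ({\bf Ax}12) inside the calculus from ({\bf Ax}15) via Remark~\ref{rems-Ci}(2), are the supporting facts that make this step go through.
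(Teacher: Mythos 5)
Your proposal is correct, but it takes a different route from the paper: the paper's entire proof is an appeal to the general soundness and completeness results for swap-structure semantics in~\cite[Section~6.4]{con:24}, whereas you (after acknowledging that citation as one option) sketch a self-contained Lindenbaum--Asser argument. Your sketch is sound: the soundness direction is the routine finite case analysis over the non-deterministic tables; for completeness, the maximal non-trivial extension $\Delta$ relative to $\varphi$ is indeed closed and prime (via DMT and ({\bf Ax}8)), the canonical three-valued assignment is well defined and exhaustive because ({\bf Ax}10) plus primeness forces $\neg\psi\in\Delta$ whenever $\psi\notin\Delta$, and the verifications you flag as the crux do go through --- e.g.\ ({\bf Ax}12) plus primeness yields $\circ\psi\in\Delta$ when $v(\psi)\in\{\1,\0\}$, ({\bf Ax}11) plus $\Delta\nvdash\varphi$ forces $\circ\psi\notin\Delta$ when $\psi,\neg\psi\in\Delta$, and ({\bf Ax}13)--({\bf Ax}14) give the deterministic behaviour of $\hat\neg$; the residual freedom in $\hat\circ$ at $\1$ and $\0$ (for \cbr) is exactly what lets the canonical valuation respect the table regardless of whether $\neg\circ\psi\in\Delta$. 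What the paper's approach buys is brevity and uniformity, since \cbr\ and \cie\ are instances of an axiom-schema family already treated in~\cite{con:24}; what your approach buys is a direct, verifiable argument that makes explicit which axiom is responsible for which entry of the Nmatrix, at the cost of having to carry out (or at least trust) the case analyses you only sketch.
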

\begin{proof} It follows from the results given in~\cite[Section~6.4]{con:24}.
\end{proof}

\begin{remark} \label{Cbr-sublog-Cie}
Clearly,  $\cons\cons\alpha$ is not valid in \cbr: take a propositional variable $p$ and a valuation $v$ in $\mm M_{\cbr}$ such that $v(p)=\1$ and $v(\cons p)=\um$. Hence, $v(\cons\cons p)=\0$. By Remarks~\ref{rems-Ci}(2), it follows that \cbr\ is properly contained in \cie.  As we shall see in the next section, the fact that ${\circ}{\circ}\alpha$ is a theorem of \cie\ has deep consequences for the paraconsistent belief system AGM$\circ$ based on \lfis\ to be considered in this paper (see Remarks~\ref{rem-RCie}).
\end{remark}

\noindent Let {\bf L} be a logic containing \mbc. When there is no risk of confusion, if $\alpha$ and $\beta$ are formulas such that $\vdash_{\bf L} \alpha \leftrightarrow \beta$, we will write $\alpha\equiv\beta$. Then:

\begin{proposition} \label{prop-Cbr}
In \cbr\ (and so in \cie) the following holds:\\[1mm]
(i) $\circ \alpha \equiv \circ\neg\alpha$, for every formula $\alpha$.\\[1mm]
(ii) If $\alpha \equiv \beta$ and $\neg\alpha \equiv \neg\beta$ then  $\circ\alpha \equiv \circ\beta$.
\end{proposition}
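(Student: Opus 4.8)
The natural route is fully semantic: by the soundness and completeness theorem above (taken with $\Gamma=\emptyset$), a biconditional $\gamma\leftrightarrow\delta$ is a theorem of \cbr\ iff every valuation $v$ in $\mm M_{\cbr}$ assigns it a designated value, and since $v(\gamma\leftrightarrow\delta)\in\D$ iff $v(\gamma)\in\D\Leftrightarrow v(\delta)\in\D$ (read off the tables for $\hat\wedge$ and $\hat\rightarrow$: $v(\gamma\to\delta)\in\D$ iff $v(\gamma)=\0$ or $v(\delta)\in\D$), proving $\circ\alpha\equiv\circ\neg\alpha$ and $\circ\alpha\equiv\circ\beta$ amounts to checking, for every valuation $v$, that the two sides are simultaneously designated or simultaneously undesignated. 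So I would first isolate three one-line facts about $\mm M_{\cbr}$: (a) for every formula $\gamma$ and every valuation $v$, $v(\circ\gamma)\in\D$ iff $v(\gamma)\neq\um$ (immediate from the $\hat\circ$ column, the nondeterminism at $\1$ and $\0$ being harmless since both output sets equal $\D$); (b) $v(\gamma)=\um$ iff $v(\neg\gamma)=\um$ (the $\hat\neg$ column is deterministic, fixes $\um$, and swaps $\1,\0$); (c) $v(\gamma)=\um$ iff both $v(\gamma)\in\D$ and $v(\neg\gamma)\in\D$ (inspect the three rows of $\hat\neg$).

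For (i): combining (a) and (b), for every $v$ we get $v(\circ\alpha)\in\D$ iff $v(\alpha)\neq\um$ iff $v(\neg\alpha)\neq\um$ iff $v(\circ\neg\alpha)\in\D$, so $\circ\alpha\leftrightarrow\circ\neg\alpha$ is valid in $\mm M_{\cbr}$ and hence a theorem; since \cbr\ is contained in \cie\ (Remark~\ref{Cbr-sublog-Cie}) the same holds there, or one re-runs the argument with the deterministic $\hat\circ$ of $\mm M_{\cie}$. For (ii): apply soundness to the hypotheses $\alpha\equiv\beta$ and $\neg\alpha\equiv\neg\beta$ to obtain that, for every valuation $v$, $v(\alpha)\in\D\Leftrightarrow v(\beta)\in\D$ and $v(\neg\alpha)\in\D\Leftrightarrow v(\neg\beta)\in\D$; then by (c) these two equivalences give $v(\alpha)=\um\Leftrightarrow v(\beta)=\um$, and by (a) this yields $v(\circ\alpha)\in\D\Leftrightarrow v(\circ\beta)\in\D$; completeness then gives $\circ\alpha\equiv\circ\beta$.

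There is no genuinely hard step here — the argument is bookkeeping over a tiny truth-table — but the one point that deserves care is exactly fact (a): the nondeterminism of $\hat\circ$ must not be allowed to disturb the equivalence ``$v(\circ\gamma)$ designated iff $v(\gamma)\neq\um$'', and it does not, precisely because the two nondeterministic rows of $\hat\circ$ both lie inside $\D$. It is also worth recording why the hypothesis $\neg\alpha\equiv\neg\beta$ is needed in (ii) and cannot be dropped: $\alpha\equiv\beta$ alone only forces the two formulas to agree on designatedness, and two such formulas can still take the values $\1$ and $\um$ on a common valuation $v$, so that $v(\circ\alpha)\in\D$ while $v(\circ\beta)\notin\D$; hence $\circ$ fails to be congruential in \cbr, which is exactly the defect remedied by \rcbr.
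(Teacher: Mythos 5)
Your proof is correct. For part (i) it coincides with the paper's (the paper simply declares the claim ``immediate by using the Nmatrix $\mm M_{\cbr}$'', which is exactly your computation via facts (a) and (b)). For part (ii), however, you take a genuinely different route: the paper argues syntactically/algebraically, using the fact that in \mbcciw\ (hence in \cbr\ and \cie) one has $\cons\alpha \equiv \sneg(\alpha\land\neg\alpha)$ with $\sneg\gamma := \neg\gamma\land\cons\gamma$ a classical negation; from $\alpha\equiv\beta$ and $\neg\alpha\equiv\neg\beta$ the congruentiality of $\land$ and of $\sneg$ then give $\cons\alpha\equiv\cons\beta$ directly. Your argument instead goes through soundness and completeness for $\mm M_{\cbr}$ and the observation that designatedness of $\gamma$ and of $\neg\gamma$ jointly determine whether $v(\gamma)=\um$, hence whether $v(\cons\gamma)$ is designated. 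Both are sound; the paper's derivation is semantics-independent and therefore transfers at once to \emph{any} extension of \mbcciw\ (which is why ``and so in \cie'' is automatic there, even though the hypotheses of (ii) are logic-relative), whereas your argument is self-contained over the three-valued tables but must be re-run over $\mm M_{\cie}$ to cover \cie\ --- a step you flag for (i) and which applies verbatim to (ii), since your facts (a)--(c) hold for $\hat\cons$ and $\hat\neg$ in $\mm M_{\cie}$ as well. Your closing observation that $\alpha\equiv\beta$ alone does not suffice (take $v(\alpha)=\1$, $v(\beta)=\um$) is accurate and is precisely the non-congruentiality of $\cons$ that motivates \rcbr.
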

\begin{proof} (i) It is immediate by using the Nmatrix $\mm M_{\cbr}$.\\[1mm]
(ii) It is an immediate consequence of the fact that in \mbcciw, the extension of \mbc\ by axiom ({\bf Ax}12), $\circ\alpha \equiv {\sim}(\alpha \land \neg\alpha)$. Here, $\sim$ is the classical negation definable in \mbcciw\ as $\sneg \alpha :=\neg\alpha \land \circ\alpha$ (see~\cite[Chapter~3]{car:con:16}).
\end{proof}

\

The properties established in Proposition \ref{prop-Cbr} express two key epistemic symmetries that have been regarded as desirable since the earliest attempts to provide an epistemic interpretation of \lfis\ through belief revision. The first asserts that the consistency of a sentence is epistemically indistinguishable from the consistency of its negation -- capturing the idea that regarding a belief as consistent is fundamentally symmetric with respect to acceptance and rejection. The second ensures that if two sentences (as well as their negations) are logically equivalent, then asserting their consistency should amount to the same epistemic commitment. These conditions support the coherent treatment of strong epistemic attitudes -- such as strong acceptance and strong rejection -- within a belief set. Indeed, in \cite{Testa2014}, such properties were already anticipated through formulations of the form ``let L be an \lfi\ satisfying\dots,'' highlighting their relevance even before a concrete logic was explicitly formulated. A more systematic treatment was later proposed in \cite{tes.con.rib.2016}, where \cbr\ was explicitly introduced, axiomatized, and had its main properties formally established. In the present paper, we revisit and systematize these results as a foundational step towards introducing \rcbr, a self-extensional extension of \cbr\ designed to capture further structural properties relevant to belief dynamics -- most notably, the replacement property (to be technically explained bellow). In \cite{tfgr}, extensionality -- as reflected at the level of revision operators -- was addressed indirectly via \emph{remainder set constructions} (adapting works on AGM revision for logics without negation \cite{description}). That paper also explicitly identified the restoration of the replacement property for \lfis\ as an open problem for advancing belief revision grounded on some ordering mechanism over belief-representing sentences, such as \emph{transitively relational partial meet constructions} and epistemic entrenchment frameworks. Here, we address this challenge by incorporating the corresponding structural behavior directly into the logic itself, ensuring that the epistemic entrenchment mechanisms -- and, consequently, belief revision -- rest on solid formal principles.

In the context of algebraic logic, a logic is said to be {\em self-extensional} if it satisfies the {\em replacement property}, namely:\\

$\mbox{if } \ \alpha_1\equiv \beta_1, \ldots, \alpha_n\equiv \beta_n \ \mbox { then } \ \varphi(\alpha_1,\ldots, \alpha_n) \equiv \varphi(\beta_1,\ldots, \beta_n)$\hspace*{3cm}(R)\\

\noindent
for every $\alpha_i, \beta_i$ and $\varphi(p_1,\ldots,p_n)$.
In~\cite{car.con.fue.2022} there were considered extensions of \lfis\ by adding rules which guarantee (R). Indeed, given an \lfi, say {\bf L}, its self-extensional version {\bf RL}  is obtained from {\bf L} by adding the (global) inference rules

$$(E_\neg) \hspace{.3cm} \displaystyle \frac{\alpha \leftrightarrow \beta}{\neg\alpha \leftrightarrow \neg\beta} \hspace{2cm}(E_\circ) \hspace{.3cm} \displaystyle \frac{\alpha \leftrightarrow \beta}{\circ\alpha \leftrightarrow \circ\beta}$$

\

The new inference rules are global, in the sense that it can only be applied to theorems. In order to recover the deduction metatheorem we adopt, as it is usually done with the necessitation rule in modal logics, the following notion of derivations in {\bf RL}:

\begin{definition} \label{def:der:RL}
We say that $\alpha$ is {\em derivable in {\bf RL}}, written $\vdash_{\bf RL} \alpha$, if there is a derivation of $\alpha$ in {\bf RL} in the usual sense. On the other hand, $\alpha$ is {\em derivable in {\bf RL} from a set of premises $\Gamma$}, written as $\Gamma \vdash_{\bf RL} \alpha$, if either $\vdash_{\bf RL} \alpha$ or there exists $\gamma_1,\ldots,\gamma_n \in \Gamma$ such that $\vdash_{\bf RL} (\gamma_1 \wedge \ldots \wedge \gamma_n) \to \alpha$. 
\end{definition}

\noindent 
The algebraic semantics for this class of self-extensional \lfis\ given in~\cite{car.con.fue.2022} consists of expansions of Boolean algebras by adding operators $\tilde{\neg}$ and $\tilde{\cons}$ for the new connectives. From now on, the Boolean operators (meet, join, complement and implication) in a given Boolean algebra $\A$  will be denoted, respectively, by $\sqcap$, $\sqcup$, $-$ and $\Rightarrow$. The minimum and maximum element will be denoted by $0$ and $1$, respectively.

\begin{definition}
A {\em Boolean algebra with \lfi\ operators} (BALFI, for short) for \rmbc\ is a  Boolean algebra $\A$ expanded with two unary operators $\tilde{\neg}$ and $\tilde{\cons}$ such that $x\sqcup \tilde{\neg}x=1$ and $x \sqcap \tilde{\neg}x \sqcap \tilde{\cons}x=0$ hold, for every $x \in \A$. A BALFI for \rmbcciw\ is a BALFI for \rmbc\ such that $\tilde{\cons} x = \bneg(x \sqcap \tilde{\neg} x)$ for every $x$. A BALFI for \rcbr\ is a BALFI for \rmbcciw\ such that $\tilde{\neg}\tilde{\neg} x = x$ for every $x$. A BALFI for \rcie\ is a BALFI for \rmbc\ such that $\tilde{\neg}\tilde{\cons} x = x \sqcap \tilde{\neg} x$ and  $\tilde{\neg}\tilde{\neg} x = x$ for every $x$. Given ${\bf L} \in \{\mbc,\mbcciw,\cbr,\cie\}$, the class of BALFIs for {\bf RL} will be denoted by $\mathbb{B}({\bf RL})$.
\end{definition}

If $\B$ is a BALFI for {\bf RL} and $h:\mathfrak{Fm} \to \mathcal{B}$ is a  homomorphism, $h$ is said to be a {\em valuation over $\B$}.

\begin{definition} \label{def:val:RL} Let {\bf RL} be as above, and let $\mathcal{B} \in \mathbb{B}({\bf RL})$.\\[1mm]
(1) We say that $\alpha$ is {\em valid in $\mathcal{B}$}, written $\models_{\mathcal{B}} \alpha$, if $h(\alpha)=1$ for every valuation $h$ over $\mathcal{B}$. A formula {\em $\alpha$ is valid in $\mathbb{B}({\bf RL})$}, written $\models_{\mathbb{B}({\bf RL})} \alpha$, if it is valid in every $\mathcal{B} \in \mathbb{B}({\bf RL})$.\\[1mm]
(2) Let $\Gamma \cup \{\alpha\} \subseteq \mathfrak{Fm}$. We say that {\em $\alpha$ is a consequence of $\Gamma$ in $\mathbb{B}({\bf RL})$}, written $\Gamma \models_{\mathbb{B}({\bf RL})} \alpha$, if either $\alpha$ is valid in $\mathbb{B}({\bf RL})$, or there exist $\gamma_1,\ldots,\gamma_n \in \Gamma$ such that $(\gamma_1 \wedge \ldots \wedge \gamma_n) \to \alpha$ is valid in $\mathbb{B}({\bf RL})$. 
\end{definition}

\begin{theorem} (Soundness and completeness of {\bf RL}  w.r.t. BALFI semantics)\\ 
Let ${\bf L} \in \{\mbc,\mbcciw,\cbr,\cie\}$, and let $\Gamma\cup\{\varphi\} \subseteq \mathfrak{Fm}$. Then:
$\Gamma \vdash_{\bf RL} \alpha$ \ iff \ $\Gamma \models_{\mathbb{B}({\bf RL})} \alpha$.
\end{theorem}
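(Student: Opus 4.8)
The plan is to prove soundness and completeness of each $\mathbf{RL}$ with respect to its BALFI class by reducing to the known results in~\cite{car.con.fue.2022}, since the logics $\mbc$, $\mbcciw$, $\cbr$, $\cie$ are among (or are definitional variants of) the $\lfis$ treated there, and the rule sets $(E_\neg)$, $(E_\circ)$ are exactly the replacement-enforcing rules studied in that paper. First I would recall the general result from~\cite{car.con.fue.2022}: for a suitable $\lfi$ $\mathbf{L}$ axiomatised over the $\{\vee,\wedge,\to\}$-fragment of classical logic, the self-extensional companion $\mathbf{RL}$ (obtained by adding $(E_\neg)$ and $(E_\circ)$ with the derivation notion of Definition~\ref{def:der:RL}) is sound and complete with respect to the class of Boolean algebras expanded by operators $\tilde\neg,\tilde\cons$ validating the algebraic translations of the axioms of $\mathbf{L}$ concerning $\neg$ and $\circ$. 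So the task is to check, for each $\mathbf{L}\in\{\mbc,\mbcciw,\cbr,\cie\}$, that the defining equations of $\mathbb{B}(\mathbf{RL})$ given in the Definition above are precisely those algebraic translations.

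The core verification is a bookkeeping step matching axioms to equations. For $\rmbc$: axiom $(\mathbf{Ax}10)$, $\alpha\vee\neg\alpha$, translates to $x\sqcup\tilde\neg x=1$; axiom $(\mathbf{Ax}11)$, $\circ\alpha\to(\alpha\to(\neg\alpha\to\beta))$, translates (using that the Boolean reduct already handles $\to$, $\wedge$ classically and that $0$ is the only non-designated truth value in the algebra sense, i.e.\ an element below everything forces triviality) to $x\sqcap\tilde\neg x\sqcap\tilde\cons x=0$; axioms $(\mathbf{Ax}1)$--$(\mathbf{Ax}9)$ are just a Hilbert presentation of positive classical logic, captured by the Boolean reduct. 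For $\rmbcciw$ one adds the translation of $(\mathbf{Ax}12)$, $\circ\alpha\vee(\alpha\wedge\neg\alpha)$; combined with $x\sqcap\tilde\neg x\sqcap\tilde\cons x=0$ this is equivalent (standard Boolean manipulation) to $\tilde\cons x=-(x\sqcap\tilde\neg x)$, which is the stated equation. For $\rcbr$ one further adds the translations of $(\mathbf{Ax}13)$ and $(\mathbf{Ax}14)$, $\alpha\leftrightarrow\neg\neg\alpha$, giving $\tilde\neg\tilde\neg x=x$. For $\rcie$ one replaces the $\mathbf{Ax}12$-translation by that of $(\mathbf{Ax}15)$, $\neg\circ\alpha\to(\alpha\wedge\neg\alpha)$; here one must also note (invoking Remarks~\ref{rems-Ci}(2)) that $(\mathbf{Ax}15)$ over $\mbc$ is equivalent to $(\mathbf{Ax}12)$ plus $\circ\circ\alpha$, and since $\cie$ also contains $(\mathbf{Ax}13)$, $(\mathbf{Ax}14)$, the resulting equations are $\tilde\neg\tilde\cons x=x\sqcap\tilde\neg x$ and $\tilde\neg\tilde\neg x=x$, as in the Definition.

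Having matched the equations, soundness is routine: every axiom is valid in every algebra of the class by the translation check, $(\mathrm{MP})$ is preserved since $h(\alpha)=1$ and $h(\alpha\to\beta)=1$ force $h(\beta)=1$ in a Boolean algebra, and the global rules $(E_\neg)$, $(E_\circ)$ are sound because $h(\alpha\leftrightarrow\beta)=1$ means $h(\alpha)=h(\beta)$, whence $h(\tilde\neg\alpha)=\tilde\neg h(\alpha)=\tilde\neg h(\beta)=h(\neg\beta)$ and likewise for $\tilde\cons$; the clause in Definition~\ref{def:val:RL}(2) for consequence from premises then follows using the Deduction property. For completeness I would carry out the Lindenbaum--Tarski construction: the replacement rules guarantee that provable equivalence $\equiv$ is a congruence on $\mathfrak{Fm}$ for all connectives including $\neg,\circ$, so $\mathfrak{Fm}/{\equiv}$ is well-defined; axioms $(\mathbf{Ax}1)$--$(\mathbf{Ax}9)$ plus $(\mathbf{Ax}10)$ (excluded middle) together with the definable classical negation $\sneg\alpha:=\neg\alpha\wedge\circ\alpha$ (as in Proposition~\ref{prop-Cbr}) make the quotient a Boolean algebra, on which $\tilde\neg,\tilde\cons$ are induced by $\neg,\circ$ and satisfy exactly the required equations; the canonical valuation sends $\alpha$ to its class, so $\not\vdash_{\mathbf{RL}}\alpha$ (and, for the premissed case, $\not\vdash_{\mathbf{RL}}(\gamma_1\wedge\cdots\wedge\gamma_n)\to\alpha$ for all finite choices) yields a refuting valuation in this Lindenbaum algebra. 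The main obstacle — and the reason the proof is best delegated to~\cite{car.con.fue.2022} — is handling the interaction of the \emph{global} rules $(E_\neg),(E_\circ)$ with the deduction-style definition of $\Gamma\vdash_{\mathbf{RL}}\alpha$: one must be careful that the Lindenbaum algebra is built from theorems of $\mathbf{RL}$ (where the global rules are freely applicable) while the consequence relation only ever uses a single implication from a finite conjunction of premises, so that the soundness and completeness arguments for the premissed case reduce cleanly to the theorem case via the Deduction property. Once that subtlety is respected, the argument is exactly that of~\cite[Section 6.4]{con:24} and~\cite{car.con.fue.2022}, so I would simply cite those sources, as the authors do for the preceding theorem.
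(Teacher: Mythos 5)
Your proposal is correct and follows the same route as the paper, which simply states that the result "follows by adapting the results obtained in~\cite[Sections~2 and~3]{car.con.fue.2022}"; your axiom-to-equation matching, soundness check, and Lindenbaum--Tarski sketch are exactly the adaptation being alluded to. The care you take with the interaction between the global rules and the deduction-style consequence relation is the right subtlety to flag, but the paper leaves all of this implicit in the citation.
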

\begin{proof}
It follows by adapting the results obtained in~\cite[Sections~2 and~3]{car.con.fue.2022}.
\end{proof}

\begin{remarks} \label{rem-RCbr} (1)  In~\cite[Proposition~3.24]{car.con.fue.2022} it was shown that \rci\ also satisfies da Costa's axiom
  \begin{gather}
    \neg(\alpha \land \lnot \alpha) \to \cons\alpha                        \tag{\bf cl}
  \end{gather}
From this, the BALFIs for \rci\ (and so for \rcie) are such that $\tilde{\neg}(x \sqcap \tilde{\neg} x)=\tilde{\cons} x = \bneg(x \sqcap \tilde{\neg} x)$ for every $x$.\\[1mm]
(2) Let \rcbr\ be the self-extensional version of \cbr. Because of Proposition~\ref{prop-Cbr}(ii), in order to guarantee (R) it is enough adding to \cbr\ the rule $(E_\neg)$, since the other rule $(E_\circ)$ is a derived one. That is, $\rcbr=\cbr+(E_\neg)$,  and it satisfies: $\alpha \equiv \beta$ implies that $\neg\alpha\equiv\neg\beta$ and $\circ\alpha \equiv\circ\beta$. From this, \rcbr\ satisfies (R), given that the binary connectives $\land$, $\lor$ and $\to$ already preserve logical equivalences. This is a consequence of   axioms ({\bf Ax}1)-({\bf Ax}8) plus (MP), which constitute a Hilbert calculus for  positive intuitionistic logic \iplp, a self-extensional logic.  Observe also that  $\alpha \equiv \neg\neg\alpha$ for every sentence $\alpha$. The same properties hold for \rcie. In particular, $\rcie=\cie+(E_\neg)$. Clearly, \rcbr\ is properly contained in \rcie.
\end{remarks}

\

\noindent We know from~\cite{car.con.fue.2022} that there are paraconsistent models for \rci, and so for its sublogics  \rmbc\ and \rmbcciw. We will show now that there are paraconsistent models for \rcie\ (and so for its sublogic  \rcbr).
 This is a crucial point, since the existence of paraconsistent models for these logics, as well as a (paraconsistent) model for \rcbr\ invalidating the laws\\
 
$\begin{array}{ll}
(cp1) & \ \ \ \ \cons\cons\alpha\\[1mm]
(cp2) & \ \ \ \ \cons\alpha \to \cons\cons\alpha\\[1mm]
(cp3) & \ \ \ \ (\alpha \land\cons\alpha) \to \cons\cons\alpha\\[1mm]
(cp4) & \ \ \ \ (\neg\alpha \land\cons\alpha) \to \cons\cons\alpha\\[2mm]
\end{array}$

\noindent (see Proposition~\ref{countermodel}) show that they are, in fact, paraconsistent. More than this, the fact that (cp1)-(cp4) do not hold in general in \rcbr\ is fundamental to our purposes, as discussed in Remarks~\ref{rem-RCie}(2). That is, the proposed formal  framework has models as expected.

\begin{proposition} \label{RCie-paraconsist} \rcie\ is an \lfi. In particular, it is a paraconsistent logic. The same holds for \rcbr.
\end{proposition}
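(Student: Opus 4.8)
The plan is to verify the definition of an \lfi\ from~\cite{car.con.mar.2007} directly, i.e.\ to establish four facts: (a) $p,\neg p \nvdash_{\rcie} q$ for two distinct variables $p,q$; (b) $\circ p, p \nvdash_{\rcie} q$; (c) $\circ p, \neg p \nvdash_{\rcie} q$; and (d) $\circ\alpha,\alpha,\neg\alpha \vdash_{\rcie} \beta$ for all $\alpha,\beta$. Fact~(d) is the only ``positive'' requirement and it is immediate: since \rcie\ contains \mbc\ (axioms ({\bf Ax}1)--({\bf Ax}11) plus (MP)), axiom ({\bf Ax}11) together with the positive intuitionistic fragment gives $\vdash_{\rcie} (\circ\alpha\land\alpha\land\neg\alpha)\to\beta$, hence $\{\circ\alpha,\alpha,\neg\alpha\}\vdash_{\rcie}\beta$ by Definition~\ref{def:der:RL}. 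For (a), (b) and (c) I would appeal to completeness of \rcie\ with respect to $\mathbb{B}(\rcie)$: refuting $\{p,\neg p\}\vdash_{\rcie} q$ amounts to finding a BALFI $\mathcal{B}\in\mathbb{B}(\rcie)$ and a valuation $h$ with $h((p\land\neg p)\to q)\neq 1$; choosing $q$ fresh with $h(q)=0$ (so that $q$ is not a theorem and the ``$\alpha$ is valid'' clause of Definition~\ref{def:val:RL}(2) plays no role), this reduces to $h(p)\sqcap\tilde{\neg}h(p)\neq 0$. Likewise (b) reduces to finding $\mathcal{B}$ and $a$ with $\tilde{\circ}a\sqcap a\neq 0$, and (c) to $\tilde{\circ}a\sqcap\tilde{\neg}a\neq 0$.

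Facts (b) and (c) are cheap and need no paraconsistency. Take the two-element Boolean algebra $\mathbf{2}$ with $\tilde{\neg}=\bneg$ and $\tilde{\circ}x:=\bneg(x\sqcap\bneg x)$; it is readily checked to be a BALFI for \rcie. Recall from Remarks~\ref{rem-RCbr}(1) that in any BALFI for \rcie\ one has $\tilde{\circ}x=\bneg(x\sqcap\tilde{\neg}x)$, and that $x\sqcup\tilde{\neg}x=1$ forces $\tilde{\neg}x\geq\bneg x$; from these a one-line computation yields $\tilde{\circ}a\sqcap a=\bneg\tilde{\neg}a$ and $\tilde{\circ}a\sqcap\tilde{\neg}a=\bneg a$. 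In $\mathbf{2}$ the former is nonzero for $a=1$ and the latter for $a=0$, which settles (b) and (c).

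The hard part is (a): it requires a BALFI $\mathcal{B}\in\mathbb{B}(\rcie)$ that is \emph{genuinely paraconsistent}, i.e.\ with $a\sqcap\tilde{\neg}a\neq 0$ for some $a$. Such a $\mathcal{B}$ cannot be finite: on a finite Boolean algebra the identities $x\sqcup\tilde{\neg}x=1$ and $\tilde{\neg}\tilde{\neg}x=x$ already force $\tilde{\neg}=\bneg$ — for an atom $a$ the only value of $\tilde{\neg}a$ lying above $\bneg a$ and compatible with $\tilde{\neg}$ being a bijection (recall $\tilde{\neg}1=0$) is $\bneg a$, and an induction on the number of atoms below $x$ propagates this to every $x$ — so a finite $\mathcal{B}$ is explosive. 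Hence one must work with an infinite Boolean algebra, and building one is the technical core of the argument; concretely, writing $g(x):=x\sqcap\tilde{\neg}x$ and using that $\tilde{\neg}$ is involutive and $\tilde{\neg}(x\sqcap\tilde{\neg}x)=\bneg(x\sqcap\tilde{\neg}x)$ (which every BALFI for \rcie\ satisfies, cf.\ Remarks~\ref{rem-RCbr}(1)), the remaining BALFI-for-\rcie\ equations boil down to $g(x)\leq x$, $g(g(x))=0$ and $g(\bneg x\sqcup g(x))=g(x)$ with $g\not\equiv 0$, and such a structure is obtained by adapting the infinite paraconsistent models for \rci\ of~\cite{car.con.fue.2022} — this is precisely the model exhibited in Proposition~\ref{countermodel}, which may therefore be invoked here to conclude that \rcie\ is an \lfi. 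Finally, \rcbr\ is an \lfi\ as well: it too contains \mbc\ (so (d) holds) and $\mathbf{2}$ is a BALFI for \rcbr\ (so (b), (c) hold), while $\rcbr\subseteq\rcie$ as consequence relations, so failure of $\{p,\neg p\}\vdash q$ in \rcie\ transfers down to \rcbr; equivalently, every BALFI for \rcie\ is a BALFI for \rcbr, so the same paraconsistent model applies.
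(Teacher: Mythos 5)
Your overall strategy --- reduce the \lfi\ conditions to exhibiting a paraconsistent BALFI, handle the ``gentle explosion'' clause via ({\bf Ax}11), and dispatch conditions (b), (c) on the two-element algebra --- is sound, and your observation that no \emph{finite} BALFI for \rcbr\ or \rcie\ can be paraconsistent (an involution $\tilde{\neg}$ with $\tilde{\neg}x\geq\bneg x$ on a finite Boolean algebra is forced to be $\bneg$) is correct and is a point the paper does not make. The gap is in the model you finally invoke for condition (a). The BALFI $\B_{mod}$ of Proposition~\ref{countermodel} is a BALFI for \rcbr\ \emph{only}: it is built precisely so as to invalidate $\cons\cons\alpha$, which is a theorem of \rcie\ (Remarks~\ref{rems-Ci}(2) and~\ref{rem-RCie}(3)). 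Hence $\B_{mod}$ cannot satisfy the defining equation $\tilde{\neg}\tilde{\cons}x=x\sqcap\tilde{\neg}x$ of BALFIs for \rcie; concretely, for $X=[1]_2$ one has $\tilde{\neg}\tilde{\cons}X=I\cup[0]_5\neq I=X\cap\tilde{\neg}X$. So your argument establishes the claim for \rcbr\ (modulo the forward reference to a later and far more elaborate construction) but leaves the main claim, that \rcie\ is an \lfi, unproved; the containment you state at the end (every BALFI for \rcie\ is one for \rcbr) goes the wrong way to repair this.

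What is needed, and what the paper actually supplies, is a separate and much simpler paraconsistent BALFI genuinely satisfying the \rcie\ equations: take $\wp(\mathbb{Z})$, declare the half-lines $(-\infty,n]$ and $[n,\infty)$ to be the inconsistent elements, set $\tilde{\neg}(-\infty,n]=[n,\infty)$, $\tilde{\neg}[n,\infty)=(-\infty,n]$, $\tilde{\neg}X=\mathbb{Z}\setminus X$ otherwise, and $\tilde{\cons}X=\mathbb{Z}\setminus(X\cap\tilde{\neg}X)$. Then $X\cap\tilde{\neg}X=\{n\}$ is nonzero but consistent, so $\tilde{\neg}\tilde{\cons}X=\{n\}=X\cap\tilde{\neg}X$ and $\tilde{\neg}\tilde{\neg}X=X$ hold throughout, yielding a paraconsistent BALFI for \rcie\ (hence, by Remarks~\ref{rem-RCbr}(1), also for \rcbr) in one stroke; in your $g$-notation this is exactly a model with $g\not\equiv 0$ but $g(g(x))=0$. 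The modular-arithmetic model $\B_{mod}$ is reserved for the later, strictly stronger task of refuting (cp1)--(cp4) in \rcbr, and is of no use for the present proposition.
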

\begin{proof}
By definition of \lfis, it is enough defining a BALFI, say $\mathcal{B}$, such that, for two different propositional variables $p$ and $q$, it holds: (i) $p, \neg p \not\models_\mathcal{B} q$; (ii) $p, \cons p \not\models_\mathcal{B} q$; and (iii) $\cons p, \neg p \not\models_\mathcal{B} q$ (see~\cite[Definition~2.1.7]{car:con:16}).
To this end, consider the Boolean algebra $\A=\wp(\mathbb{Z})$ (the powerset of the set $\mathbb{Z}$ of integer numbers).  For each $n \in \mathbb{Z}$ let
$$(-\infty, n] := \{ m \in \mathbb{Z} \ : \ m \leq n \} \ \ \mbox{ and } \ \  [n,\infty) := \{ m \in \mathbb{Z} \ : \ m \geq n \}.$$
Let $\mathcal{I} := \{ (-\infty, n] \ : \ n \in \mathbb{Z} \} \cup \{ [n,\infty) \ : \ n \in \mathbb{Z} \} $ be the set of {\em inconsistent truth-values} and $\mathcal{C}:=\A \setminus \mathcal{I}$ be the set of {\em consistent truth-values}. Observe that, for every $X \in \A$: $X \in \mathcal{I}$ iff $\mathbb{Z} \setminus X \in \mathcal{I}$. From this, $X \in \mathcal{C}$ iff $\mathbb{Z} \setminus X \in \mathcal{C}$, for every $X \in \A$.
Define unary operators $\tilde{\neg}, \tilde{\circ}:\A \to \A$ as follows: $\tilde{\neg}(-\infty, n] = [n,\infty)$; $\tilde{\neg}[n,\infty) = (-\infty,n]$; and $\tilde{\neg}X=\mathbb{Z} \setminus X$, if $X \in \mathcal{C}$. By the observation above, $\tilde{\neg}$ is a well-defined function. Finally, define $\tilde{\cons}X=\mathbb{Z} \setminus (X \cap \tilde{\neg}X)$. We have the following table:

\begin{center}
\begin{tabular}{|c|c|c|c|c|c|c|}
\hline
$X$ & $\tilde{\neg} X$ & $X \cap \,\tilde{\neg}X$ & $\tilde{\cons}X$ & $\tilde{\neg}\tilde{\neg}X$ & $\tilde{\cons}\tilde{\neg}X$  & $\tilde{\neg}\tilde{\cons}X$ \\ \hline \hline
$(-\infty, n]$ & $[n,\infty)$ & $\{n\}$ & $\mathbb{Z} \setminus\{n\}$ & $(-\infty, n]$ & $\mathbb{Z} \setminus\{n\}$ & $\{n\}$\\ \hline
$[n,\infty)$ & $(-\infty, n]$ & $\{n\}$ & $\mathbb{Z} \setminus\{n\}$ & $[n,\infty)$ & $\mathbb{Z} \setminus\{n\}$ & $\{n\}$ \\ \hline
$X \in \mathcal{C}$ & $\mathbb{Z}\setminus X$ & $\emptyset$ & $\mathbb{Z}$ & $X$ & $\mathbb{Z}$ & $\emptyset$  \\ \hline
\end{tabular}
\end{center}
\noindent Clearly, the resulting structure is a BALFI $\mathcal{B}$ for \rcie, which satisfies the desired requirements. This shows that \rcie\ is an \lfi\ and, in particular, it is a paraconsistent logic. The same holds for \rcbr.
\end{proof}

\section{A fundamental  countermodel in \rcbr\ using modular arithmetic} \label{sect:countermodel}

This section is devoted  to prove that the schemas \\

$\begin{array}{ll}
(cp1) & \ \ \ \ \cons\cons\alpha\\[1mm]
(cp2) & \ \ \ \ \cons\alpha \to \cons\cons\alpha\\[1mm]
(cp3) & \ \ \ \ (\alpha \land\cons\alpha) \to \cons\cons\alpha\\[1mm]
(cp4) & \ \ \ \ (\neg\alpha \land\cons\alpha) \to \cons\cons\alpha\\[2mm]
\end{array}$

\

\noindent
are not valid in \rcbr. A standard way to show that a certain schema does not hold in a given logic is to find a model of the logic which invalidates that schema. The BALFI semantic for logic  \rcbr\ is intrincate, and the use of infinite (paraconsistent) models is essential. In the present case, an interesting BALFI for \rcbr, called $\B_{mod}$, will we introduced, showing that it invalidates the schemas (cp1)-(cp4) above. Observe that (cp1) implies (cp2), and the later implies both (cp3) and (cp4). Then, if (cp3) or (cp4) are shown to be not valid in \rcbr\ then (cp2) is not valid, therefore (cp1) is also not valid.

The  BALFI  $\B_{mod}$ will be defined as an expansion of the Boolean algebra $\wp(\mathbb{Z})$. However, defining the inconsistent truth-values (where $\tilde{\neg}$ is paraconsistent) is considerably more complex than in the BALFI over $\wp(\mathbb{Z})$ built in Proposition~\ref{RCie-paraconsist} to show that \rcie\ is paraconsistent.
Indeed, a brief analysis of the basic properties of the set of inconsistent truth-values in  \rcbr\ is necessary, in order to obtain a sufficient condition which guarantees that a BALFI satisfies the required properties. Basically, the set $\mathcal{I}$ of  inconsistent truth-values in any BALFI for \rcbr\ over  $\wp(\mathbb{Z})$ must be closed under paraconsistent negation $\tilde{\neg}$ and classical negation  (i.e., Boolean complement $X^c:=\mathbb{Z}\setminus X$). Moreover, there must be at least one element $X$ in $\mathcal{I}$ such that $X \cap  \tilde{\neg} X$ is also in $\I$ (see Proposition~\ref{inval-cp} below). Since $\tilde{\neg}$ is involutive, it is enough to define, for every  $X \in \I$, the sets  $\tilde{\neg} (X)$ and $\tilde{\neg}(X^c)$. To obtain a comprehensive yet manageable family of inconsistent sets, we define them as finite unions and intersections of equivalence classes (or their complements) of integers modulo different prime numbers. The use of (very basic) properties of modular arithmetic will be therefore used along the construction of $\I$. In special, the {\em Chinese Remainder Theorem} (CRT) will play a fundamental role in this construction, since it will guarantee that these sets (the elements of $\I$) are non-empty.

Let us start by stating some basic results on paraconsistent BALFIs for \rcbr, obtaining a sufficient condition to {\em invalidate} axiom schema  (cp2) (and so (cp1)).

In order to define a paraconsistent BALFI in general, starting from a Boolean algebra  $\A$, it is necessary to consider a subset $\emptyset \neq\mathcal{I} \subseteq \A$ formed by the {\em inconsistent elements} of $\A$. That is, $x \in \I$ iff $x \sqcap  \tilde{\neg} x \neq 0$ (where $\tilde{\neg}$ is the paraconsistent negation operator to be defined). The fact that $\I\neq\emptyset$ reflects the  paraconsistent nature of the BALFI. Let $\C=\A \setminus \I = \{x \in \A \ : \ x \sqcap \tilde{\neg} x =0\}$ be the set of {\em consistent elements} of $\A$. Clearly, $0 \in \C$ and so $0 \notin \I$. Observe that  $\I = \A \setminus \C$, and
$$(cneg) \ \ \ \ x \in \C \ \mbox{  iff } \ \tilde{\neg} x = \bneg x.$$

\noindent
It is immediate to see that the Boolean complement $\bneg$ is injective, that is,
$$(bneg) \ \ \ \ \bneg x = \bneg y \ \mbox{  iff } x=y .$$
The same holds for $\tilde{\neg}$ in any BALFI for \rcbr:

\begin{lemma} \label{lema-BALFI0}
Let $\B$ be a  BALFI for \rcbr. Then, $\tilde{\neg}$ is injective, that is: $\tilde{\neg} x = \tilde{\neg} y$ if and only if $x=y$.
\end{lemma}
\begin{proof}
Suppose that $\tilde{\neg} x = \tilde{\neg} y$. Then, $x = \tilde{\neg}\tilde{\neg} x = \tilde{\neg}\tilde{\neg} y = y$. The converse is immediate.
\end{proof}

\begin{lemma} \label{lema-BALFIs}
Let $\B$ be a BALFI for \rcbr. Then:

$\begin{array}{lll}
(1) \ x \in \C \ \mbox{ iff } \ x =\tilde{\neg}\bneg x; &\hspace{1cm}& (4) \ x \in \C \ \mbox{ iff } \ \tilde{\neg} x \in \C; \\
(2) \ x \in \C \ \mbox{ iff } \ x =\bneg\tilde{\neg} x;  && (5)  \ x \in \I \ \mbox{ iff } \ \bneg x \in \I;  \\
(3) \ x \in \C \ \mbox{ iff } \ \bneg x \in \C;  && (6) \ x \in \I \ \mbox{ iff } \ \tilde{\neg} x \in \I. 
\end{array}
$
\end{lemma}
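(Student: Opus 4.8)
The plan is to prove the six equivalences of Lemma~\ref{lema-BALFIs} by reducing each to the defining identity of BALFIs for \rcbr\ together with the characterization $(cneg)$ and the two injectivity facts $(bneg)$ and Lemma~\ref{lema-BALFI0}. Recall that in a BALFI for \rcbr\ we have $\tilde{\neg}\tilde{\neg}x = x$ for all $x$, that $x \in \C$ means $x \sqcap \tilde{\neg}x = 0$, and that $x \in \C$ iff $\tilde{\neg}x = \bneg x$. First I would establish (3): if $x \in \C$ then $\tilde{\neg}x = \bneg x$, so $x \sqcap \bneg x = 0$ trivially, but more usefully $\bneg x \sqcap \tilde{\neg}(\bneg x)$ needs to be shown to be $0$; here the cleanest route is to prove (1) and (2) first and deduce (3) from them, or to argue directly: from $\tilde{\neg}\tilde{\neg}x = x$ and $x \in \C$ we get $\tilde{\neg}x = \bneg x$, hence $\tilde{\neg}\tilde{\neg}x = \tilde{\neg}\bneg x$, i.e. $x = \tilde{\neg}\bneg x$, which is exactly (1). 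Symmetrically, applying $\tilde{\neg}$ to $\tilde{\neg}x = \bneg x$ is not directly what we want for (2); instead use $x = \tilde{\neg}\tilde{\neg}x$ and substitute $\tilde{\neg}x = \bneg x$ into the inner occurrence: $x = \tilde{\neg}(\bneg x)$ again, so for (2) I would start from $x \in \C$, note $\bneg x \sqcap \tilde{\neg}\bneg x = \bneg x \sqcap x = 0$ (using (1)), hence $\bneg x \in \C$ (this gives (3) simultaneously), and then $\tilde{\neg}\bneg x = \bneg\bneg x = x$ by $(cneg)$ applied to $\bneg x$, which is (2). For the converses in (1)–(3): if $x = \tilde{\neg}\bneg x$ then $x \sqcap \tilde{\neg}x = \tilde{\neg}\bneg x \sqcap \tilde{\neg}\tilde{\neg}\bneg x = \tilde{\neg}\bneg x \sqcap \bneg x = x \sqcap \bneg x = 0$ using injectivity-free Boolean manipulation plus $\tilde{\neg}\tilde{\neg} = \id$, so $x \in \C$; similarly for (2) and (3).

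Next I would handle (4): $x \in \C$ iff $\tilde{\neg}x \in \C$. One direction: if $x \in \C$ then by $(cneg)$, $\tilde{\neg}x = \bneg x$, and by (3), $\bneg x \in \C$, so $\tilde{\neg}x \in \C$. Conversely, if $\tilde{\neg}x \in \C$, then since $\tilde{\neg}(\tilde{\neg}x) = x$ and we have just shown $\C$ is closed under $\tilde{\neg}$, we get $x = \tilde{\neg}\tilde{\neg}x \in \C$. Then (5) and (6) are the contrapositives of (3) and (4) respectively, using $\I = \A \setminus \C$: indeed $x \in \I$ iff $x \notin \C$ iff $\bneg x \notin \C$ (by (3)) iff $\bneg x \in \I$, giving (5); and $x \in \I$ iff $x \notin \C$ iff $\tilde{\neg}x \notin \C$ (by (4)) iff $\tilde{\neg}x \in \I$, giving (6).

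I do not anticipate a serious obstacle here; the lemma is essentially a bookkeeping exercise in exploiting the involutive negation $\tilde{\neg}\tilde{\neg} = \id$ together with the fact that on consistent elements $\tilde{\neg}$ coincides with the Boolean complement. The one point requiring a little care is the \emph{order} in which to prove the six items so as not to argue circularly: the dependency I would follow is (1) $\Rightarrow$ (3) (and symmetrically establish (2)), then (3) is used to get (4), then (5) and (6) follow by complementation. The only genuinely substantive input is the identity $\tilde{\neg}\tilde{\neg}x = x$ defining BALFIs for \rcbr\ (the analogous lemma fails for \rmbc\ or \rmbcciw\ precisely because this identity is missing), plus the elementary Boolean fact that $\bneg$ is an involution, which is what makes the converse directions in (1)–(3) go through.
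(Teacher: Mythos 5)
Your proof is correct and follows essentially the same route as the paper's: everything reduces to $(cneg)$, the involutivity of $\tilde{\neg}$ and $\bneg$ (i.e.\ Lemma~\ref{lema-BALFI0} and $(bneg)$), with items (5) and (6) obtained from (3) and (4) by passing to complements. One small slip worth fixing: in your derivation of (2) you conclude $\tilde{\neg}\bneg x = x$, which is item (1) again rather than item (2); the identity $x = \bneg\tilde{\neg} x$ follows instead by applying $\bneg$ to both sides of $\tilde{\neg}x=\bneg x$ and using $(bneg)$, exactly as in the paper.
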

\begin{proof} (1) $x \in \C$ iff $\tilde{\neg} x = \bneg x$, by (cneg), iff $x = \tilde{\neg}\tilde{\neg} x = \tilde{\neg}\bneg x$, by Lemma~\ref{lema-BALFI0}.\\[1mm]
(2) $x \in \C$ iff $\tilde{\neg} x = \bneg x$, by (cneg), iff $\bneg \tilde{\neg} x = \bneg\bneg x=x$, by (bneg).\\[1mm]
(3) $x \in \C$ iff $\tilde{\neg}\bneg x = x = \bneg\bneg x$, by~(1), iff  $\bneg x \in \C$, by (cneg).\\[1mm]
(4) $x \in \C$ iff $\bneg\tilde{\neg} x = x = \tilde{\neg}\tilde{\neg} x$, by~(2), iff  $\tilde{\neg} x \in \C$, by (cneg).\\[1mm]
(5) $x \in \I$ iff $x \notin \C$ iff $\bneg x \notin \C$, by~(3), iff $\bneg x \in \I$.\\[1mm]
(6) $x \in \I$ iff $x \notin \C$ iff $\tilde{\neg} x \notin \C$, by~(4), iff $\tilde{\neg} x \in \I$.
\end{proof}

\

\noindent Next result establishes a sufficient condition for a paraconsistent BALFI  not to  validate the schema (cp2).

\begin{proposition} \label{inval-cp}
Let $\B$ be a paraconsistent BALFI for \rcbr. Suppose that there exists $x \in \I$ such that  $x \sqcap  \tilde{\neg} x \in \I$. Then, $\tilde{\circ} x \not\leq \tilde{\circ} \tilde{\circ} x$ and so  $\B$ does not validate the schema  $\cons\alpha \to \cons\cons\alpha$.
\end{proposition}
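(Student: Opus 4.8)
The goal is to exhibit, under the stated hypothesis, a witness showing that $\tilde{\cons}x \not\le \tilde{\cons}\tilde{\cons}x$, which by the soundness half of the BALFI completeness theorem (together with the fact that $\le$ corresponds to $\to$-validity in a Boolean algebra) immediately yields that $\cons\alpha \to \cons\cons\alpha$ fails in $\B$. The plan is to compute $\tilde{\cons}x$ and $\tilde{\cons}\tilde{\cons}x$ explicitly using the two defining identities of a BALFI for \rcbr, namely $\tilde{\cons}y = \bneg(y \sqcap \tilde{\neg}y)$ for all $y$ (inherited from \rmbcciw) and $\tilde{\neg}\tilde{\neg}y = y$ for all $y$, and then show that the assumption ``$x \in \I$ and $x \sqcap \tilde{\neg}x \in \I$'' forces $\tilde{\cons}x \sqcap \tilde{\cons}\tilde{\cons}x \neq \tilde{\cons}x$.

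First I would write $z := x \sqcap \tilde{\neg}x$, so that $\tilde{\cons}x = \bneg z$ by definition. Since $x \in \I$ we have $z \neq 0$ (indeed $z \neq 0$ is exactly the definition of $x \in \I$), and by hypothesis $z \in \I$ as well. Next I compute $\tilde{\cons}\tilde{\cons}x = \tilde{\cons}(\bneg z) = \bneg\big(\bneg z \sqcap \tilde{\neg}\bneg z\big)$. Here I would invoke Lemma~\ref{lema-BALFIs}: since $z \in \I$, part (5) gives $\bneg z \in \I$, hence $\bneg z$ is \emph{not} consistent, so by (cneg) $\tilde{\neg}\bneg z \neq \bneg\bneg z = z$. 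Actually the cleaner route is: $z \in \I$ means $z \sqcap \tilde{\neg}z \neq 0$; I want to relate $\bneg z \sqcap \tilde{\neg}\bneg z$ to this. Using $\tilde{\neg}\tilde{\neg}(\bneg z)=\bneg z$ and injectivity of $\tilde{\neg}$ (Lemma~\ref{lema-BALFI0}) together with Lemma~\ref{lema-BALFIs}(1)–(2), one shows $\tilde{\neg}\bneg z = \tilde{\neg}\bneg z$ and in particular that $\bneg z \sqcap \tilde{\neg}\bneg z \neq 0$ because $\bneg z \in \I$. Set $w := \bneg z \sqcap \tilde{\neg}\bneg z$, so $w \neq 0$ and $\tilde{\cons}\tilde{\cons}x = \bneg w$.

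Now the final comparison: $\tilde{\cons}x = \bneg z$ and $\tilde{\cons}\tilde{\cons}x = \bneg w$, and $\tilde{\cons}x \le \tilde{\cons}\tilde{\cons}x$ would mean $\bneg z \le \bneg w$, i.e.\ $w \le z$ in the Boolean order. But $w = \bneg z \sqcap \tilde{\neg}\bneg z \le \bneg z$, so $w \le z$ together with $w \le \bneg z$ forces $w \le z \sqcap \bneg z = 0$, contradicting $w \neq 0$. Hence $\tilde{\cons}x \not\le \tilde{\cons}\tilde{\cons}x$, and therefore $h(\cons p \to \cons\cons p)=\tilde{\cons}h(p) \Rightarrow \tilde{\cons}\tilde{\cons}h(p) \neq 1$ for a valuation $h$ sending a propositional variable $p$ to $x$, so $\B$ does not validate $\cons\alpha \to \cons\cons\alpha$.

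The main obstacle is the middle step: verifying carefully that the hypothesis ``$x \sqcap \tilde{\neg}x \in \I$'' genuinely propagates to ``$w = \bneg z \sqcap \tilde{\neg}\bneg z \neq 0$.'' This is where Lemma~\ref{lema-BALFIs}(5) (that $\I$ is closed under Boolean complement) is essential — it converts $z \in \I$ into $\bneg z \in \I$, which is precisely $w \neq 0$. Everything else is a short Boolean-algebra computation once the two BALFI identities for \rcbr\ and the injectivity lemma are in hand. One should also double-check the innocuous-looking translation between $\le$ in the algebra and validity of the implication formula, but that is immediate from Definition~\ref{def:val:RL} and the soundness/completeness theorem for \rcbr.
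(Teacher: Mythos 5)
Your proof is correct and follows essentially the same route as the paper: both hinge on Lemma~\ref{lema-BALFIs}(5) to conclude that $\tilde{\cons}x=\bneg(x\sqcap\tilde{\neg}x)$ lies in $\I$, hence $w=\tilde{\cons}x\sqcap\tilde{\neg}\tilde{\cons}x\neq 0$, and then a one-line Boolean computation (the paper uses $a\not\leq\bneg b$ iff $a\sqcap b\neq 0$, you use order-reversal of $\bneg$ plus $w\leq\bneg z$, which is the same fact). The brief hedging in your middle paragraph is harmless, since the statement you ultimately rely on --- $\bneg z\in\I$ means precisely $\bneg z\sqcap\tilde{\neg}\bneg z\neq 0$ --- is exactly the definition of $\I$.
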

\begin{proof}
Observe that, in any Boolean algebra, $a \leq \bneg b$ iff $a \Rightarrow \bneg b = \bneg a \sqcup \bneg b = \bneg(a \sqcap b) =1$ iff   $a \sqcap b =0$. From this,
$$(\neq 0) \ \ \ \ a \not\leq \bneg b \ \mbox{ if and only if } \ a \sqcap b \neq 0$$
in any Boolean algebra. Now, let $\B$ be a paraconsistent BALFI for \rcbr\ and $x \in \I$ such that  $x \sqcap  \tilde{\neg} x \in \I$. By Lemma~\ref{lema-BALFIs}(5), $\tilde{\cons} x=\bneg(x \sqcap  \tilde{\neg} x)$ belongs to $\I$ and so $\tilde{\cons} x \sqcap \tilde{\neg} \tilde{\cons} x \neq 0$. Observe that $\tilde{\cons}\tilde{\cons} x=\bneg(\tilde{\cons} x \sqcap \tilde{\neg} \tilde{\cons} x)$. Given that
$$\tilde{\cons} x \sqcap (\tilde{\cons} x \sqcap \tilde{\neg} \tilde{\cons} x) = \tilde{\cons} x \sqcap \tilde{\neg} \tilde{\cons} x \neq 0,$$
then $\tilde{\cons} x \not\leq \bneg(\tilde{\cons} x \sqcap \tilde{\neg} \tilde{\cons} x)=\tilde{\cons}\tilde{\cons} x$, by $(\neq 0)$.

The latter condition obviously shows that $\B$ invalidates the schema (cp2). Indeed, let $p$ be a propositional variable and $h$ a valuation over the BALFI $\B$ such that $h(p)=x$, where $x$ is as above. Then $h(\cons p) =\tilde{\circ} x \not\leq \tilde{\circ} \tilde{\circ} x =   h(\cons\cons p)$. From this, $h(\cons p \to \cons\cons p)\neq 1$, showing that the schema (cp2) is not validated by $\B$.
\end{proof}

\

Before constructing the desired BALFI for \rcbr, let us recall some basic notions from modular arithmetic, which will be used along the construction. Let $m \geq 1$ be a positive integer. Given integers $x,y$, they are {\em congruent modulo $m$}, denoted by $x \equiv y$ (mod $m$), if $m$ divides $x-y$, i.e., $x-y=m\cdot k$ for some integer $k$\footnote{Observe that the symbol $\equiv$ is also used in this paper to denote logical equivalence between two formulas in a given logic. In turn, the symbol $-$ is also used to denote the Boolean complement in a given Boolean algebra. The context of use of the symbols $\equiv$ and $-$ will avoid any confusions.}. This defines an equivalence relation. More than this, it is a congruence on the ring $\mathbb{Z}$ of integers, i.e.:   $x \equiv y$ (mod $m$) and  $x' \equiv y'$ (mod $m$) implies that  $x \star x' \equiv y \star y'$ (mod $m$) for $\star\in \{+,\cdot\}$. The equivalence class of $x$ (mod $m$) will be denoted by $[x]_m$, that is, $[x]_m = \{ y \in \mathbb{Z} \ : \  x \equiv y$ (mod $m$)$\}$. Clearly,   $x \equiv y$ (mod $m$) iff $x$ and $y$ have the same remainder  when divided by $m$, hence there are only $m$ equivalence classes modulo $m$: $[0]_m, [1]_m, \ldots, [m-1]_m$.

Recall that two positive integers $n_1$, $n_2$ are {\em coprime} if there is no prime number which divides simultaneously both numbers. In particular, if  $n_1$ and $n_2$ are two different prime numbers then they are coprime.
The following classical result, which appeared for the first time in the Chinese manuscript {\em  Sunzi Suanjing} (3rd to 5th century {\em AD}), will be fundamental to our purposes:

\begin{theorem} [Chinese Remainder Theorem, CRT] \label{CRT}
Let $n_1, \ldots, n_k$ be different integers greater than $1$,  which are   pairwise coprime. Let $a_1, \ldots, a_k$ be positive integers such that $a_i < n_i$  for $1 \leq i \leq k$. Then, the system of equations $x \equiv a_i$ (mod $n_i$) ($1 \leq i \leq k$)  has a solution, and any two solutions  are congruent modulo $m=n_1 \cdot \ldots \cdot n_k$.
\end{theorem}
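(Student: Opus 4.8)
The plan is to prove the two assertions separately — existence of a simultaneous solution, and uniqueness modulo $m = n_1 \cdots n_k$ — relying on Bézout's identity (equivalently, the extended Euclidean algorithm) as the single nontrivial input, with everything else being elementary divisibility bookkeeping.

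For existence I would proceed by induction on $k$. The case $k=1$ is immediate, since $x = a_1$ solves $x \equiv a_1 \pmod{n_1}$. For the inductive step, suppose $x_0$ simultaneously solves $x \equiv a_i \pmod{n_i}$ for $1 \le i \le k-1$; by the inductive hypothesis $x_0$ is determined modulo $m' = n_1 \cdots n_{k-1}$. I first observe that $n_k$ is coprime to $m'$: since $n_k$ shares no prime factor with any $n_i$ for $i < k$, and (by Euclid's lemma) a prime dividing the product $m'$ divides one of its factors, no prime divides both $n_k$ and $m'$. By Bézout there are integers $u, v$ with $m' u + n_k v = 1$, and setting $x = a_k m' u + x_0 n_k v$ one checks directly that $x \equiv x_0 \pmod{m'}$ — hence $x \equiv a_i \pmod{n_i}$ for $i < k$, as $n_i \mid m'$ — and $x \equiv a_k \pmod{n_k}$, using $m' u = 1 - n_k v$ and $n_k v = 1 - m' u$ respectively. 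Alternatively, and perhaps more transparently, I would exhibit the closed form $x = \sum_{i=1}^k a_i M_i y_i$, where $M_i = m/n_i$ and $y_i$ is a Bézout inverse of $M_i$ modulo $n_i$ (which exists because $\gcd(M_i, n_i) = 1$, as $n_i$ is coprime to each factor of $M_i$); reducing modulo $n_i$ annihilates every term but the $i$-th, which reduces to $a_i$.

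For uniqueness modulo $m$, suppose $x$ and $x'$ both solve the system; then $n_i \mid (x - x')$ for every $i$. It remains to upgrade this to $m \mid (x - x')$, which I would obtain from the auxiliary lemma that pairwise coprime integers all dividing a common integer $d$ have product dividing $d$ — itself proved by induction from the standard fact that $a \mid d$, $b \mid d$ and $\gcd(a,b) = 1$ together imply $ab \mid d$. Conversely, adding any multiple of $m$ to a solution yields a solution, so the solution set is precisely one residue class modulo $m$; the hypothesis $a_i < n_i$ plays no role beyond normalizing the chosen representatives.

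There is no serious obstacle: the result is classical and its only genuine ingredient is Bézout's identity, the remainder being two coprimality-propagation facts (coprimality is preserved under products, and pairwise-coprime divisors multiply to a divisor of any common multiple). If anything needs care, it is stating the inductive hypothesis correctly — the partial solution is unique only modulo the partial product $m'$, not as an integer — so that the coprimality of $m'$ and $n_k$ can be invoked cleanly in the merge step.
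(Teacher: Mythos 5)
Your proof is correct: the inductive merge via B\'ezout (or equivalently the closed form $x=\sum_i a_i M_i y_i$ with $M_i=m/n_i$) establishes existence, and the observation that pairwise coprime divisors of $x-x'$ have product dividing $x-x'$ gives uniqueness modulo $m$. Note, however, that the paper does not prove this theorem at all --- it states the Chinese Remainder Theorem as a classical result (attributed to the \emph{Sunzi Suanjing}) and uses it as a black box in the construction of the countermodel $\B_{mod}$ --- so there is no proof in the paper to compare against; your argument is simply a standard and complete proof of the cited classical fact. Your remark that the normalization $a_i<n_i$ is inessential is also accurate.
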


Only prime numbers will be considered as modules in our construction, hence CRT will be applied safely. Thus, consider the standard enumeration of the prime numbers $m_1, m_2, m_3, \ldots$ such that $m_k < m_{k+1}$ for every $k$. Then, $m_1=2$, $m_2=3$, $m_3=5$, and so on. 

To simplify the presentation, consider the following notation:

\begin{enumerate}
\item  $X^c$ stands for $\mathbb{Z} \setminus X$, for $X \subseteq \mathbb{Z}$. Hence, $[x]_m^c$ stands for $\mathbb{Z} \setminus [x]_m$, for $0 \leq x < m$; i.e., $[x]_m^c =  \bigcup \{[y]_m  \ : \ 0 \leq y < m, \mbox{ and } y \neq x\}$.

\item $\bar{\mathcal{S}}$ stands for $\{X^c \ : \ X \in \mathcal{S}\}$, for $\mathcal{S} \subseteq \wp(\mathbb{Z})$.

\item For $k \geq 1$ let $[1]^c_{3,k}=[1]_3^c \cup [1]_5^c \cup \ldots \cup [1]_{m_k}^c$. Observe that  $[1]^c_{3,2}=[1]_3^c$, and  $[1]^c_{3,1}=\emptyset$.

\item For $k \geq 2$ let $[1]^c_{5,k}=[1]_5^c \cup [1]_7^c \cup \ldots \cup [1]_{m_k}^c$. Observe that  $[1]^c_{5,3}=[1]_5^c$, and  $[1]^c_{5,2}=\emptyset$.
\end{enumerate}


\begin{remark} \label{rem-compl} The following obvious property will be used intensively from now on: 
$$(inc) \ \ \ \ [0]_m \subseteq [1]_m^c, \ \mbox{ for every integer $m \geq 2$.}$$ 
Clearly, $[x]_2^c=[1-x]_2$ for $x=0,1$.
 \end{remark}

\noindent  Now, it will be defined  $\B_{mod}$, a BALFI for \rcbr\ expanding $\wp(\mathbb{Z})$, enjoying the desired properties. The construction of  $\B_{mod}$ will be made in two steps: in the first step, a family $\I_1$ of {\em inconsistent elements} in  $\wp(\mathbb{Z})$ (as discussed above) will be defined. By Lemma~\ref{lema-BALFIs},  $X \in \I_1$ iff $X^c \in \I_1$ iff $\tilde{\neg}(X) \in \I_1$. This gives origin to a BALFI structure for \rcbr. However, this is not enough to our purposes (in principle, to invalidate~(cp2)). In order to do this, and by virtue of Proposition~\ref{inval-cp}, a second family $\I_2$ of inconsistent elements, disjoint from $\I_1$, will be defined, starting from $I=X_0 \cap \tilde{\neg}X_0$ for a fixed (arbitrary) element $X_0$ of $\I_1$. The resulting BALFI for \rcbr, called $\B_{mod}$, will therefore invalidate the schema  (cp2) (and so (cp1)). More that this, it will also invalidate (cp3)-(cp4) (see Proposition~\ref{countermodel} below).

\begin{definition} 
Let  $I=[1]_2 \cap [0]_3$; hence, $I^c=[0]_2 \cup [0]^c_3$.  Consider the following subsets of $\wp(\mathbb{Z})$:\\[1mm]
(1) $\I_1^x = \big\{[x]_2 \cup [1]^c_{3,k} \ : \ k\geq 1 \big\} \cup \big\{[x]_2 \cup [1]^c_{3,k} \cup [0]_{m_{k+1}} \ : \ k\geq 1 \big\}$, for $x=0,1$;\\[1mm]
(2) $\I_1= (\I_1^0\cup \bar{\I}_1^0) \cup (\I_1^1\cup \bar{\I}_1^1)$; \\[1mm]
(3) $\I_2^0 = \big\{I \cup [1]^c_{5,k} \ : \ k\geq 2 \big\} \cup \big\{I \cup [1]^c_{5,k} \cup [0]_{m_{k+1}} \ : \ k\geq 2 \big\}$;\\[1mm]
(4) $\I_2^1 = \big\{I^c \cup [1]^c_{5,k} \ : \ k\geq 2 \big\} \cup \big\{I^c \cup [1]^c_{5,k} \cup [0]_{m_{k+1}} \ : \ k\geq 2 \big\}$;\\[1mm]
(5) $\I_2= (\I_2^0\cup \bar{\I}_2^0) \cup (\I_2^1\cup \bar{\I}_2^1)$; \\[1mm]
(6) $\I= \I_1 \cup \I_2$.
\end{definition}

\

\noindent In order to define a paraconsistent negation  $\tilde{\neg}$ on $\I$ such that $X \cup  \tilde{\neg}(X)=\mathbb{Z}$, the following must hold, for every $X \in \I$:  $\tilde{\neg}(X)= X^c \cup Y$ for some $Y\neq\emptyset$. In addition, it will required that  $\tilde{\neg}(\tilde{\neg}(X))=X$. Taking into account that $[0]_m \subseteq [1]^c_m$, the following operator is defined:

\begin{definition} 
Consider the function $\tilde{\neg}: \I \to \I$ defined as follows:\\[1mm]
(1) Over $\I_1^0 \cup \I_1^1$:
\begin{enumerate}
\item[(1.1)] $\tilde{\neg}([x]_2)=[1-x]_2 \cup [0]_3$, for $x=0,1$;
\item[(1.2)] $\tilde{\neg}([x]_2 \cup [1]^c_{3,k})= ([x]_2 \cup [1]^c_{3,k-1} \cup [0]_{m_k})^c$, for $x=0,1$ and $k \geq 2$;
\item[(1.3)] $\tilde{\neg}([x]_2 \cup [1]^c_{3,k} \cup [0]_{m_{k+1}})= ([x]_2 \cup [1]^c_{3,k})^c$, for $x=0,1$ and $k \geq 1$.
\end{enumerate}
(2) Over $\I_2^0$:
\begin{enumerate}
\item[(2.1)] $\tilde{\neg}(I)=I^c \cup [0]_5$;
\item[(2.2)] $\tilde{\neg}(I \cup [1]^c_{5,k})= (I \cup [1]^c_{5,k-1} \cup [0]_{m_k})^c$, for $k \geq 3$;
\item[(2.3)] $\tilde{\neg}(I \cup [1]^c_{5,k} \cup [0]_{m_{k+1}})= (I \cup [1]^c_{5,k})^c$, for $k \geq 2$.
\end{enumerate}
(3) Over $\I_2^1$:
\begin{enumerate}
\item[(3.1)] $\tilde{\neg}(I^c)=I \cup [0]_{5}$;
\item[(3.2)] $\tilde{\neg}(I^c \cup [1]^c_{5,k})=(I^c \cup [1]^c_{5,k-1} \cup [0]_{m_k})^c$, for $k \geq 3$;
\item[(3.3)] $\tilde{\neg}(I^c \cup [1]^c_{5,k} \cup [0]_{m_{k+1}})= (I^c \cup [1]^c_{5,k})^c$, for $k \geq 2$.
\end{enumerate}
(4) Finally, if  $\tilde{\neg} x=y$ then define  $\tilde{\neg} y = x$.\\[1mm] 
This completes the definition of  $\tilde{\neg}$.
\end{definition}

\

\noindent Let  $X_0=[1]_2$. Observe that $X_0 \in \I_1$ such that $\tilde{\neg}X_0 = [0]_2 \cup  [0]_{3}$. Hence, $I=X_0 \cap \tilde{\neg}X_0$ generates the family $\I_2$, while $[1]_2$ and $[0]_2=[1]^c_2$ generate $\I_1$. This will become clearer after the following Remark.

\begin{remark} \label{explaining}
Let us analyze the construction of the previous Definition in a more  explicit way:\\[1mm]
(a) Observe that  $\I_1^x=\I_1^{x,1} \cup \I_1^{x,2}$ (for $x=0,1$), where
$$\I_1^{x,1}=\big\{[x]_2, \, [x]_2 \cup [1]^c_{3}, \, [x]_2 \cup [1]^c_{3} \cup [1]^c_{5}, \, [x]_2 \cup [1]^c_{3} \cup [1]^c_{5} \cup [1]^c_{7}, \ldots\big\}$$
and
$$\I_1^{x,2}=\big\{[x]_2 \cup [0]_{3}, \, [x]_2 \cup [1]^c_{3} \cup [0]_{5}, \, [x]_2 \cup [1]^c_{3} \cup [1]^c_{5} \cup [0]_{7}, \ldots\big\}.$$
The negation operator $\tilde{\neg}$ is a function (a bijection, as we will see) from $\I_1$ to $\I_1$ such that 
$$\tilde{\neg}([x]_2)= [1-x]_2 \cup [0]_{3}, \ \ \tilde{\neg}([x]_2 \cup [0]_{3})= [1-x]_2;$$ 
it maps $\I_1^{x,1} \setminus\{[x]_2\}$ into $\bar{\I}_1^{x,2}$ (and vice versa) as follows:
$$\tilde{\neg}([x]_2 \cup [1]^c_{3})=([x]_2 \cup [0]_{3})^c, \ \ \tilde{\neg}([x]_2 \cup [1]^c_{3} \cup [1]^c_{5})=([x]_2\cup [1]^c_{3} \cup [0]_{5})^c, \ldots;$$
and it maps $\I_1^{x,2} \setminus\{[x]_2 \cup [0]_{3}\}$ into $\bar{\I}_1^{x,1}$ (and vice versa) as follows:
$$\tilde{\neg}([x]_2 \cup [1]^c_{3} \cup [0]_{5})=([x]_2 \cup [1]^c_{3})^c, \ \ \tilde{\neg}([x]_2 \cup [1]^c_{3} \cup [1]^c_{5} \cup [0]_{7})=([x]_2 \cup [1]^c_{3} \cup [1]^c_{5})^c, \ldots \, .$$
This shows that $\tilde{\neg}$ maps $\I_1^{x} \setminus\{[x]_2,[x]_2 \cup [0]_{3}\}$ into $\bar{\I}_1^{x}$ (and vice versa), and $\tilde{\neg}([x]_2), \,\tilde{\neg}([x]_2 \cup [0]_{3})\in \I_1^{1-x}$. Hence, $\tilde{\neg}$ maps $\I_1$ into $\I_1$.\\[1mm]
(b) Observe that  $\I_2^0=\I_2^{0,1} \cup \I_2^{0,2}$, where
$$\I_2^{0,1}=\big\{I, \, I \cup [1]^c_{5}, \, I \cup [1]^c_{5} \cup [1]^c_{7}, \, I \cup [1]^c_{5} \cup [1]^c_{7} \cup [1]^c_{11}, \ldots\big\}$$
and
$$\I_2^{0,2}=\big\{I \cup [0]_{5}, \, I \cup [1]^c_{5} \cup [0]_{7}, \, I \cup [1]^c_{5} \cup [1]^c_{7} \cup [0]_{11}, \ldots\big\}.$$
(c) Note that  $\I_2^1=\I_2^{1,1} \cup \I_2^{1,2}$, where
$$\I_2^{1,1}=\big\{I^c, \, I^c \cup [1]^c_{5}, \, I^c \cup [1]^c_{5} \cup [1]^c_{7}, \, I^c \cup [1]^c_{5} \cup [1]^c_{7} \cup [1]^c_{11}, \ldots\big\}$$
and
$$\I_2^{1,2}=\big\{I^c \cup [0]_{5}, \, I^c \cup [1]^c_{5} \cup [0]_{7}, \, I^c \cup [1]^c_{5} \cup [1]^c_{7} \cup [0]_{11}, \ldots\big\}.$$
The negation operator $\tilde{\neg}$ is a function (a bijection, as we will see) from $\I_2$ to $\I_2$ such that 
$$\tilde{\neg}(I)= I^c \cup [0]_{5}, \ \ \tilde{\neg}(I^c \cup [0]_{5})= I;$$
$$\tilde{\neg}(I^c)= I \cup [0]_{5}, \ \ \tilde{\neg}(I \cup [0]_{5})= I^c;$$
it maps $\I_2^{0,1} \setminus\{I\}$ into $\bar{\I}_2^{0,2}$ (and vice versa) as follows:
$$\tilde{\neg}(I \cup [1]^c_{5})=(I \cup [0]_{5})^c, \ \ \tilde{\neg}(I \cup [1]^c_{5} \cup [1]^c_{7})=(I\cup [1]^c_{5} \cup [0]_{7})^c, \ldots;$$
it maps $\I_2^{0,2} \setminus\{I \cup [0]_{5}\}$ into $\bar{\I}_2^{0,1}$ (and vice versa) as follows:
$$\tilde{\neg}(I \cup [1]^c_{5} \cup [0]_{7})=(I \cup [1]^c_{5})^c, \ \ \tilde{\neg}(I \cup [1]^c_{5} \cup [1]^c_{7} \cup [0]_{11})=(I \cup [1]^c_{5} \cup [1]^c_{7})^c, \ldots$$
(and so it maps $\I_2^{0}$ into $\bar{\I}_2^{0}$ and vice versa); it maps $\I_2^{1,1} \setminus\{I^c\}$ into $\bar{\I}_2^{1,2}$ (and vice versa) as follows:
$$\tilde{\neg}(I^c \cup [1]^c_{5})=(I^c \cup [0]_{5})^c, \ \ \tilde{\neg}(I^c \cup [1]^c_{5} \cup [1]^c_{7})=(I^c\cup [1]^c_{5} \cup [0]_{7})^c, \ldots;$$
and it maps $\I_2^{1,2} \setminus\{I^c \cup [0]_{5}\}$ into $\bar{\I}_2^{1,1}$ (and vice versa) as follows:
$$\tilde{\neg}(I^c \cup [1]^c_{5} \cup [0]_{7})=(I^c \cup [1]^c_{5})^c, \ \ \tilde{\neg}(I^c \cup [1]^c_{5} \cup [1]^c_{7} \cup [0]_{11})=(I^c \cup [1]^c_{5} \cup [1]^c_{7})^c, \ldots$$
(and so it maps $\I_2^{1}$ into $\bar{\I}_2^{1}$ and vice versa). This shows that $\tilde{\neg}$ is a function from $\I_2$ to $\I_2$.
\end{remark}

\begin{proposition} \label{B-mod-well-def}
The following holds:\\[1mm]
(1) $\I_1 \cap \I_2 = \emptyset$;\\[1mm]
(2) The function $\tilde{\neg}: \I \to \I$ is well-defined;\\[1mm] 
(3) $X \cup \tilde{\neg}X = \mathbb{Z}$, for every $X \in \I$;\\[1mm]
(4) $X \cap \tilde{\neg}X \neq \emptyset$, for every $X \in \I$;\\[1mm]
(5) $X \in \I$ iff $\tilde{\neg}X \in \I$ iff $X^c \in \I$;\\[1mm]
(6) $\tilde{\neg}\tilde{\neg}X = X$;\\[1mm]
(7) There exists $X \in \I$ such that $X \cap \tilde{\neg}X \in \I$.
\end{proposition}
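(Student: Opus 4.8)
The strategy is to establish (1) first, since it is the heart of the matter, and then to read off (2)--(6) from (1) together with the explicit clauses defining $\tilde{\neg}$; item (7) is then a one-line computation. The only external inputs are the elementary inclusion $(inc)$, i.e., $[0]_m\subseteq [1]_m^c$, and the Chinese Remainder Theorem, which enters exactly where one needs an integer with prescribed residues modulo a finite list of distinct primes.

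For (1), the plan is to attach to every $X\in\I$ a finite invariant, its \emph{mod-$6$ core} $\kappa(X)\subseteq\mathbb{Z}/6\mathbb{Z}$, and to show that the cores occurring for $\I_1$ and for $\I_2$ form disjoint families. Each $X\in\I$ is a union of residue classes modulo some $N=2\cdot 3\cdot 5\cdots p_k$ (with $k\geq 2$, so that $N/6$ is coprime to $6$); set $\kappa(X)$ equal to the reduction modulo $6$ of $X\cap [1]_{N/6}$. A routine check shows $\kappa(X)$ is independent of the chosen $N$, so $\kappa$ is a genuine invariant of the set $X$. The computation of $\kappa$ is then case by case, using $(inc)$ and the observation that on the class $[1]_{N/6}$ every summand of the form $[1]_p^c$ or $[0]_p$ with $p\mid N/6$ is empty: for $X\in\I_1$ this leaves $\kappa(X)$ equal to one of the six ``rectangular'' patterns $[0]_2$, $[1]_2$, $[0]_2\cup[0]_3$, $[1]_2\cup[0]_3$, $[0]_2\cup[1]_3^c$, $[1]_2\cup[1]_3^c$ (read modulo $6$), or to the complement of one of these; for $X\in\I_2$ the same kind of summand vanishes on $[1]_{N/6}$ and only the seed survives, so $\kappa(X)$ is either $I=[3]_6$ (a singleton in $\mathbb{Z}/6\mathbb{Z}$) or $I^c=[3]_6^c$. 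Since $\{[3]_6,[3]_6^c\}$ is disjoint from the family produced by $\I_1$ (whose only singletons are $\{1\}$ and $\{4\}$, and whose two five-element members omit $1$ or $4$ but never $3$), no element of $\I_1$ equals an element of $\I_2$; this is (1).

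Granting (1), the remaining items are verifications. For (2): by (1) the explicit clauses (1.1)--(3.3) act on disjoint pieces of $\I$, each returns an element of $\I$, and they fit together with the reflection clause (4) into a single total, well-defined, involutive function on $\I$ — one checks that the complementary families $\bar{\I}_1^{x,i}$ and $\bar{\I}_2^{i,j}$ are exactly the values taken by clauses (1.2)--(1.3), (2.2)--(2.3), (3.2)--(3.3), while clause (1.1) and its reflection merely swap $[x]_2$ with $[1-x]_2\cup[0]_3$; single-valuedness (uniqueness of the description of each $X\in\I$) follows from $\kappa$ together with the period of $X$ and the position of its seed. Then (6), $\tilde{\neg}\tilde{\neg}X=X$, is immediate from clause (4). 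For (5): $\I$ is closed under complement by the very definition of $\I_1$ and $\I_2$, and under $\tilde{\neg}$ by (2), and with (6) this gives the three stated equivalences. For (3): in each case (1.1)--(3.3) the equality $X\cup\tilde{\neg}X=\mathbb{Z}$ reduces via $(inc)$ to an inclusion such as $[x]_2\cup[1]^c_{3,k-1}\cup[0]_{p_k}\subseteq [x]_2\cup[1]^c_{3,k}$, and the reflected cases follow since $X\cup\tilde{\neg}X$ is symmetric in $X$ and $\tilde{\neg}X$. For (4): in each case $X\cap\tilde{\neg}X$ is shown nonempty by producing, via the CRT, an integer whose residues modulo the finitely many primes at play are suitably chosen (e.g., $\equiv 2$ modulo the relevant prime $\geq 5$ and $\equiv 1$ modulo the smaller ones). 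Finally (7): take $X=X_0=[1]_2\in\I_1\subseteq\I$; by clause (1.1), $\tilde{\neg}X_0=[0]_2\cup[0]_3$, hence $X_0\cap\tilde{\neg}X_0=([1]_2\cap[0]_2)\cup([1]_2\cap[0]_3)=I\in\I_2\subseteq\I$.

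The hard part is (1): verifying that the mod-$6$ core behaves as claimed across all sub-cases (both superscripts $x\in\{0,1\}$, both ``types'' of member, complemented or not, and every index $k$), together with the uniqueness of descriptions that makes the definition of $\tilde{\neg}$ in (2) sound. Once those combinatorial facts are settled, everything else is routine manipulation of residue classes with $(inc)$ and the Chinese Remainder Theorem.
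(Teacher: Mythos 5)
Your proposal is correct, but a comparison with ``the paper's proof'' is necessarily one-sided here: the paper disposes of Proposition~\ref{B-mod-well-def} with the single line ``Straightforward, by using CRT,'' so you are supplying an argument where the authors supply none. Your one genuine addition is the mod-$6$ core $\kappa$: reducing $X\cap[1]_{N/6}$ modulo $6$ does kill every summand $[1]_p^c$ and $[0]_p$ with $p\geq 5$, is independent of $N$ by CRT, and commutes with complementation, so the resulting finite invariants ($\{3\}$ and its complement for $\I_2$, versus the six ``rectangular'' patterns and their complements for $\I_1$, whose singletons are $\{1\},\{4\}$ and whose five-element members all contain $3$) cleanly separate $\I_1$ from $\I_2$. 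This is exactly the right device for item~(1), and it also does double duty for the uniqueness-of-representation facts needed to make the case-defined $\tilde{\neg}$ single-valued in item~(2) -- the only place where your write-up is still slightly compressed, since one must additionally check (as you indicate) that the overlaps between the directly-defined clauses and the reflection clause, e.g.\ $[0]_2=[1]_2^c$ and $I=(I^c)^c$, receive consistent values, and that distinct indices $k$ yield distinct sets (a further CRT computation). Items (3)--(7) are handled exactly as the authors presumably intend: $(inc)$ for the unions, CRT witnesses for the nonempty intersections, closure under complement by construction for (5), the reflection clause for (6), and the explicit computation $X_0\cap\tilde{\neg}X_0=[1]_2\cap[0]_3=I\in\I_2$ for (7). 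In short, your proof is a correct and essentially complete filling-in of a proof the paper omits, and the $\kappa$-invariant is worth recording since disjointness of $\I_1$ and $\I_2$ is the one claim that is not visibly ``straightforward.''
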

\begin{proof} Straightforward, by using CRT.
\end{proof}

\

\noindent Let $\C:= \wp(\mathbb{Z}) \setminus\I$. By Proposition~\ref{B-mod-well-def}(5), $X \in \C$ iff $X^c \in \C$.

\begin{definition} Let $\B_{mod}$ be the expansion of the Boolean algebra $\wp(\mathbb{Z})$ by the unary operators $\hat{\neg},\hat{\cons}:\wp(\mathbb{Z}) \to \wp(\mathbb{Z})$ defined as follows:
\begin{itemize}
\item[(i)] $\hat{\neg}X=\tilde{\neg}X$ for $X \in \I$, and $\hat{\neg}X=X^c$ for $X \in \C$;
\item[(ii)]  $\hat{\cons}X=(X \cap \hat{\neg}X)^c$ for $X \in \wp(\mathbb{Z})$.
\end{itemize}
\end{definition}

\


\begin{proposition} \label{countermodel}
$\B_{mod}$ is a paraconsistent BALFI for \rcbr\ that does not validate the schemas (cp1)-(cp4).
\end{proposition}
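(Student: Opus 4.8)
The plan is to prove the statement in three stages: that $\B_{mod}$ is a BALFI for \rcbr, that it is paraconsistent, and that it refutes each of the schemas (cp1)--(cp4). Essentially all the genuine work has been packed into Proposition~\ref{B-mod-well-def}, so what remains is bookkeeping. First I would check that $\B_{mod}\in\mathbb{B}(\rcbr)$ by verifying the defining identities, splitting on whether $X\in\C$ or $X\in\I$. The law $X\sqcup\hat{\neg}X=\mathbb{Z}$ is immediate for $X\in\C$ (then $\hat{\neg}X=X^c$) and is precisely Proposition~\ref{B-mod-well-def}(3) for $X\in\I$; the \rmbcciw-law $\hat{\cons}X=\bneg(X\sqcap\hat{\neg}X)=(X\cap\hat{\neg}X)^c$ is clause (ii) of the definition of $\B_{mod}$ verbatim, and $X\sqcap\hat{\neg}X\sqcap\hat{\cons}X=\emptyset$ follows at once; moreover $\hat{\cons}$ is well-defined because $\tilde{\neg}$ is, by Proposition~\ref{B-mod-well-def}(2). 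For $\hat{\neg}\hat{\neg}X=X$: if $X\in\C$ then $X^c\in\C$ as well (contrapositive of Proposition~\ref{B-mod-well-def}(5)), so $\hat{\neg}\hat{\neg}X=(X^c)^c=X$; if $X\in\I$ then $\hat{\neg}X=\tilde{\neg}X\in\I$ by Proposition~\ref{B-mod-well-def}(5), hence $\hat{\neg}\hat{\neg}X=\tilde{\neg}\tilde{\neg}X=X$ by Proposition~\ref{B-mod-well-def}(6).

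Paraconsistency is then almost immediate: since $[1]^c_{3,1}=\emptyset$, the family $\I_1^1$ contains $[1]_2$, so $\I\neq\emptyset$, and by the terminology set before Lemma~\ref{lema-BALFI0} this already makes $\B_{mod}$ a paraconsistent BALFI, in accordance with Proposition~\ref{RCie-paraconsist}. Concretely, a valuation $h$ with $h(p)=[1]_2$ and $h(q)=\emptyset$ has $h(p\land\neg p)=[1]_2\cap([0]_2\cup[0]_3)=[1]_2\cap[0]_3\neq\emptyset$, so $h((p\land\neg p)\to q)\neq 1$.

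For the refutation of (cp1)--(cp4) I would fix $X_0=[1]_2$ and compute once and for all: $\hat{\neg}X_0=[0]_2\cup[0]_3$ (clause (1.1) of the definition of $\tilde{\neg}$), $X_0\cap\hat{\neg}X_0=I$ where $I=[1]_2\cap[0]_3$, and hence $\hat{\cons}X_0=I^c$; next, $I\in\I_2^0$ (the case $k=2$, as $[1]^c_{5,2}=\emptyset$), so $I^c\in\I$ (in fact $I^c\in\I_2^1$), whence clause (3.1) yields $\hat{\neg}\hat{\cons}X_0=\tilde{\neg}(I^c)=I\cup[0]_5$, so that $\hat{\cons}\hat{\cons}X_0=\big(I^c\cap(I\cup[0]_5)\big)^c=(I^c\cap[0]_5)^c$. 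Reading $I$ as the residue class $[3]_6$ (by CRT on the moduli $2$ and $3$), one sees at a glance that $5,10\in I^c\cap[0]_5$, that $5$ is odd (so $5\in X_0$), and that $10$ is even (so $10\in\hat{\neg}X_0$). Then, taking a valuation $h$ with $h(p)=X_0$: (cp1) is refuted because $5\notin(I^c\cap[0]_5)^c=h(\cons\cons p)$, so $h(\cons\cons p)\neq 1$; (cp2) because $5\in I^c=h(\cons p)$ while $5\notin h(\cons\cons p)$; (cp3) because $5\in X_0\cap I^c=h(p\land\cons p)$ while $5\notin h(\cons\cons p)$; and (cp4) because $10\in\hat{\neg}X_0\cap I^c=h(\neg p\land\cons p)$ while $10\notin h(\cons\cons p)$. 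Alternatively, the failure of (cp2) --- and hence of (cp1) --- can be obtained abstractly from Proposition~\ref{inval-cp} together with Proposition~\ref{B-mod-well-def}(7).

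The one step demanding real care --- and hence the main obstacle --- is the middle of the last computation: one must recognize that $I^c$ lies in $\I_2^1$ (equivalently, in $\bar{\I}_2^0$), so that $\hat{\neg}$ on it is computed via clause (3.1) and not via another clause of the definition of $\tilde{\neg}$, and then keep the modular bookkeeping straight when exhibiting the small witnesses $5$ and $10$. Everything genuinely delicate --- the disjointness $\I_1\cap\I_2=\emptyset$, the well-definedness of $\tilde{\neg}$ on $\I$, and the involutivity $\tilde{\neg}\tilde{\neg}X=X$, all resting on the Chinese Remainder Theorem --- has already been absorbed into Proposition~\ref{B-mod-well-def}, which we may invoke.
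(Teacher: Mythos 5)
Your proposal is correct and follows essentially the same route as the paper: the same witness $X_0=[1]_2$, the same computations of $\hat{\cons}X_0=I^c$ and $\hat{\cons}\hat{\cons}X_0=(I^c\cap[0]_5)^c$ via clause (3.1), and the same appeal to Proposition~\ref{inval-cp} / Proposition~\ref{B-mod-well-def}. The only differences are cosmetic: you verify the BALFI identities explicitly (the paper leaves this to Proposition~\ref{B-mod-well-def}) and exhibit the concrete witnesses $5$ and $10$ where the paper invokes CRT to assert their existence.
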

\begin{proof}
Recall, from the construction of $\B_{mod}$  described above, that, for $X=[1]_2$, it holds that $\tilde{\neg}X = [0]_2 \cup  [0]_{3}$. Since both $X$ and  $I:=X \cap \tilde{\neg}X = [1]_2 \cap [0]_3$ belong to $\mathcal{I}$, it follows by Proposition~\ref{inval-cp} that $\B_{mod}$ is a paraconsistent BALFI for \rcbr\ that does not validate the schema (cp2). Now, observe that \\

$\begin{array}{ll}
(1) & \ \ \ \ \tilde{\circ}X = I^c=[0]_2 \cup  [0]_{3}^c.\\[2mm]
\end{array}$

\noindent
Moreover: $\tilde{\neg}\tilde{\circ}X = \tilde{\neg} I^c = I \cup  [0]_{5}$. From this, $\tilde{\circ}X \cap \tilde{\neg}\tilde{\circ}X = I^c \cap (I \cup  [0]_{5}) = I^c \cap [0]_{5}$. This implies that \\

$\begin{array}{ll}
(2) & \ \ \ \ \tilde{\circ}\tilde{\circ}X = (I^c \cap [0]_{5})^c=  I \cup [0]_{5}^c =([1]_2 \cap [0]_3)\cup  [0]_{5}^c \neq \mathbb{Z}.\\[2mm]
\end{array}$

\noindent Therefore, $\B_{mod}$ does not validate the schema (cp1). Moreover,\\

$\begin{array}{ll}
(3) & \ \ \ \ X \cap \tilde{\circ}X = [1]_2 \cap ([0]_2 \cup  [0]_{3}^c)= [1]_2 \cap  [0]_{3}^c\\[2mm]
(4) & \ \ \ \ \tilde{\neg}X \cap \tilde{\circ}X = ([0]_2 \cup  [0]_{3}) \cap ([0]_2 \cup  [0]_{3}^c)= [0]_2.\\[2mm]
\end{array}$

\noindent Thus, using the CRT (Theorem~ \ref{CRT}), consider first $x \in \mathbb{Z}$ such that  
$$x \equiv 1 \ \mbox{(mod $2$)}, \hspace{7mm} x \equiv 1 \ \mbox{(mod $3$)}, \hspace{7mm} x \equiv 0 \ \mbox{(mod $5$)}.$$
By~(2) and~(3), it is immediate to see that $x \in   (X \cap \tilde{\circ}X)$ but $x \notin  \tilde{\circ}\tilde{\circ}X$. Hence, $X \cap \tilde{\circ}X\not\subseteq  \tilde{\circ}\tilde{\circ}X$.  Of course, this implies that  $\tilde{\circ}X\not\subseteq  \tilde{\circ}\tilde{\circ}X$.

Using the CRT again, let $y \in \mathbb{Z}$ such that  
$$y \equiv 0 \ \mbox{(mod $2$)}, \hspace{7mm} y \equiv 0 \ \mbox{(mod $5$)}.$$
By~(2) and~(4), $y \in   (\tilde{\neg}X \cap \tilde{\circ}X)$ but $y \notin  \tilde{\circ}\tilde{\circ}X$. This shows that $\tilde{\neg}X \cap \tilde{\circ}X\not\subseteq  \tilde{\circ}\tilde{\circ}X$.

From the above considerations, let $p$ be a propositional variable and $h$ a valuation over the BALFI $\B_{mod}$ such that $h(p)=X$. Then, $h(p \land \cons p) =X \cap \tilde{\circ}X\not\subseteq  \tilde{\circ}\tilde{\circ}X =   h(\cons\cons p)$. From this, $h((p \land \cons p) \to \cons\cons p)\neq 1$, showing that the schema (cp3) is not valid in $\B_{mod}$. Analogously, $h(\neg p \land \cons p) =\tilde{\neg}X \cap \tilde{\circ}X\not\subseteq  \tilde{\circ}\tilde{\circ}X =   h(\cons\cons p)$. Hence, $h((\neg p \land \cons p) \to \cons\cons p)\neq 1$, showing that the schema (cp4) is not valid in $\B_{mod}$. Notice that $h(\cons p) =\tilde{\circ}X\not\subseteq  \tilde{\circ}\tilde{\circ}X =   h(\cons\cons p)$, and so $h(\cons p \to \cons\cons p)\neq 1$. Clearly,  $h(\cons\cons p)\neq 1$.
\end{proof}

\begin{corollary} \label{cp-not-valid}
None of the schemas (cp1)-(cp4) is valid  in \rcbr, and so they are not derivable in \rcbr, that is:\\

$\begin{array}{ll}
(1) & \ \ \ \ \nvdash_{\rcbr} \cons\cons\alpha \ \mbox{ for some formula $\alpha$;}\\[1mm]
(2) & \ \ \ \ \nvdash_{\rcbr}\cons\alpha \to \cons\cons\alpha \ \mbox{ for some formula $\alpha$;}\\[1mm]
(3) & \ \ \ \ \nvdash_{\rcbr}(\alpha \land\cons\alpha) \to \cons\cons\alpha \ \mbox{ for some formula $\alpha$;}\\[1mm]
(4) & \ \ \ \ \nvdash_{\rcbr}(\neg\alpha \land\cons\alpha) \to \cons\cons\alpha \ \mbox{ for some formula $\alpha$.}\\[2mm]
\end{array}$
\end{corollary}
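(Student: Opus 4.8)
The plan is to reduce Corollary~\ref{cp-not-valid} to the semantic facts already established about $\B_{mod}$, using the soundness direction of the BALFI completeness theorem for $\rcbr$. The key observation is that $\B_{mod}$ is a paraconsistent BALFI for $\rcbr$ (Proposition~\ref{countermodel}), so it belongs to $\mathbb{B}(\rcbr)$; hence any schema derivable in $\rcbr$ must be valid in $\B_{mod}$. Since Proposition~\ref{countermodel} exhibits, for the propositional variable $p$ and the valuation $h$ with $h(p)=[1]_2$, a witness that each of (cp1)--(cp4) fails in $\B_{mod}$, we will conclude $\nvdash_{\rcbr}$ for the corresponding instance.

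Concretely, I would argue as follows. By the soundness part of the soundness-and-completeness theorem for $\{\mbc,\mbcciw,\cbr,\cie\}$ applied to $\cbr$, we have $\vdash_{\rcbr}\varphi$ implies $\models_{\mathbb{B}(\rcbr)}\varphi$, and in particular $\models_{\B_{mod}}\varphi$ whenever $\vdash_{\rcbr}\varphi$. Now take $\alpha=p$ for a fixed propositional variable $p$. Proposition~\ref{countermodel} shows $h(\cons\cons p)\neq 1$, so $\not\models_{\B_{mod}}\cons\cons\alpha$, giving~(1). Likewise it shows $h(\cons p \to \cons\cons p)\neq 1$, $h((p\land\cons p)\to\cons\cons p)\neq 1$ and $h((\neg p\land\cons p)\to\cons\cons p)\neq 1$, yielding~(2), (3) and~(4) respectively. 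Each of these directly contradicts derivability of the respective schema instance, so the schemes are not derivable in $\rcbr$.

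There is essentially no obstacle here: the entire mathematical content lives in Proposition~\ref{inval-cp} and Proposition~\ref{countermodel}, and the corollary is a bookkeeping statement that transfers "not valid in a particular model" to "not derivable" via soundness. The only point requiring a word of care is that derivability in $\rcbr$ is defined via the modified notion of Definition~\ref{def:der:RL} (because of the global rule $(E_\neg)$), but the soundness theorem for $\mathbb{B}(\rcbr)$ is stated precisely for that notion of derivability, so no extra work is needed; one simply invokes it. I would therefore present the proof as a short two-line deduction: "This is an immediate consequence of Proposition~\ref{countermodel} together with the soundness of $\rcbr$ with respect to $\mathbb{B}(\rcbr)$, since $\B_{mod}\in\mathbb{B}(\rcbr)$." If a slightly more explicit version is wanted, spell out the contrapositive: were any (cpi) a theorem of $\rcbr$, it would be valid in $\B_{mod}$, contradicting the computations $h(\cons\cons p)\neq 1$, etc., carried out in the proof of Proposition~\ref{countermodel}.
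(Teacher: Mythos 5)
Your proposal is correct and is exactly the argument the paper intends: the corollary is an immediate consequence of Proposition~\ref{countermodel} via the soundness half of the BALFI soundness-and-completeness theorem for \rcbr, since $\B_{mod}\in\mathbb{B}(\rcbr)$ and the computations with $h(p)=[1]_2$ witness the failure of each of (cp1)--(cp4) in that model. Your remark about the modified notion of derivability (Definition~\ref{def:der:RL}) being the one covered by the soundness theorem is a correct and harmless clarification; nothing further is needed.
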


\section{The AGM$\circ$ system: Contraction}\label{agm}

The AGM model of belief revision, introduced by Alchourrón, Gärdenfors, and Makinson \cite{agm1985}, formalizes how agents may change their beliefs in response to new information. In that framework, a \emph{belief set} $K$ is deductively closed: whenever $K \vdash_{L} \varphi$ (in a fixed background logic $L$, typically classical logic), then $\varphi \in K$. In this context, a belief-representing sentence $\alpha$ is \emph{accepted} in $K$ if $\alpha \in K$, \emph{rejected} if $\neg\alpha \in K$, and \emph{indeterminate} otherwise. These are the standard \emph{epistemic attitudes} an agent can hold with respect to a belief set.

Furthermore, in such setting a collection of postulates is defined to characterize the distinct operations over a belief set, namely: \textbf{expansion} ($K+\alpha$) -- the addition of a sentence $\alpha$ to a belief set $K$ that does not contradict it, so that nothing needs to be removed; \textbf{revision} ($K*\alpha$) -- the addition of a sentence that may contradict the current beliefs, requiring in this case the removal of some of them; and \textbf{contraction} ($K\div\alpha$) -- the removal of a sentence from a belief set without introducing its negation. 

These operations are constrained by general \emph{rationality criteria}, such as \emph{closure}, \emph{success}, and \emph{minimal change} -- which are made precise by associating each operation with a set of postulates. Alternatively, one may define the operations by an explicit construction. In this paper we focus on \emph{contraction}, presenting it both axiomatically and via an \emph{epistemic entrenchment} construction. This is usually sufficient to capture AGM belief dynamics, since \emph{expansion} is given by the familiar set-theoretic clause $K+\alpha = Cn(K\cup\{\alpha\})$, and \emph{revision} is definable from expansion and contraction via the \emph{Levi identity}: $K*\alpha = (K\div \neg\alpha) + \alpha$.

Specifically, in this paper we will consider contraction within the AGM$\circ$ framework -- an extension of classical AGM tailored to \lfis. Besides accommodating additional epistemic attitudes, AGM$\circ$ also admits alternative constructions for revision, in which temporarily contradictory epistemic states may occur (and be methodologically exploited) as intermediate stages of belief change.

As for epistemic attitudes, in addition to \emph{acceptance}, \emph{rejection}, and \emph{indeterminacy}, AGM$\circ$ allows the following further classifications for a belief-representing sentence $\alpha$ (relative to a belief set $K$): $\alpha$ is \emph{overdetermined} (or \emph{contradictory}) if $\alpha\in K$ and $\neg\alpha\in K$; $\alpha$ is \emph{consistent} if $\circ\alpha\in K$ (independently of whether $\alpha$ is accepted or rejected); $\alpha$ is \emph{strongly} (or \emph{boldly}) \emph{accepted} if $\alpha\in K$ and $\circ\alpha\in K$; and $\alpha$ is \emph{strongly} (or \emph{boldly}) \emph{rejected} if ${\sim}\alpha\in K$, where (in extensions of) \mbcciw  , ${\sim}\alpha := \neg\alpha \land \circ\alpha$ (i.e., $\neg\alpha\in K$ and $\circ\alpha\in K$).

It is worth emphasizing that, in AGM$\circ$, a sentence may be \emph{accepted} and \emph{rejected} simultaneously -- namely, when $\alpha\in K$ and $\neg\alpha\in K$ -- without producing the trivial belief set (this only happens if $\alpha$ is also \emph{consistent} in $K$). Indeed, the classical explosive behaviour is preserved for \emph{strongly accepted} and \emph{strongly rejected} sentences (see Remark~\ref{rem-RCie}(1)). This yields a fine-grained framework for analysing epistemic states in which contradictions arise (cf.\ Section~\ref{examples}).

From now on, we consider the AGM$\circ$ system of paraconsistent belief change based on \lfis\ introduced in~\cite{tes.con.rib.2016}, instantiated with the logic \rcbr. All definitions and results below also apply, \emph{mutatis mutandis}, to \rcie. The following notions will be useful:

\begin{definition} \label{unrev} 
Let $K$ be a closed theory in  \rcbr\ (such theories will be called  {\em belief sets} from now on). A sentence $\alpha$ is said to be {\em irrevocable in $K$}, denoted by $I_K(\alpha)$, if either  $\vdash_{\rcbr}\alpha$ or $\circ\alpha \in K$. Otherwise $\alpha$ is said to be {\em revocable in $K$}, which will be denoted by $R_K(\alpha)$.
\end{definition}

Let $Th(\rcbr)$ be the set of closed theories in \rcbr, i.e., the set of belief sets in \rcbr.
The properties of \rcbr\ lead us to the following simplification of {\em ($\div$extensionality)} taken from Definition~5.16 in~\cite{tes.con.rib.2016}:

\begin{definition}[Postulates for extensional  AGM$\circ$ contraction based on \rcbr]\label{def:postcontbola}  An {\em extensional AGM$\circ$ contraction}  over \rcbr\ is a function $\div: Th(\rcbr)\times \mathfrak{Fm} \longrightarrow Th(\rcbr)$ satisfying the following postulates:

\begin{description}
\setlength\itemsep{0em}
\item {\bf ($\div$closure)} $K \div \alpha = Cn(K \div \alpha)$.
\item {\bf ($\div$success)} If $R_K(\alpha)$ then $\alpha \notin K \div \alpha$.
\item {\bf ($\div$inclusion)} $K \div \alpha \subseteq K$.
\item {\bf ($\div$failure)} If $\circ \alpha \in K$ then $K \div \alpha = K$.
\item {\bf ($\div$relevance)} If $\beta \in K \setminus (K \div \alpha)$ then there exists $X$ such that $K \div \alpha \subseteq Cn(X) \subseteq K$ and $\alpha \notin Cn(X)$, but $\alpha \in Cn(X) + \beta$. 
\item {\bf ($\div$extensionality)} If $\alpha \equiv \beta$ then  $K \div \alpha=K \div \beta$.
\end{description}
\end{definition} 

The set of postulates for belief contraction is designed to ensure that the system adheres to fundamental principles of rationality, which govern the dynamics of epistemic change. In particular, they provide the metatheoretical conditions that an operator $\div$ must satisfy in order to count as an extensional contraction on belief sets over \rcbr.

The postulate of \emph{Closure} ensures that the output of a contraction is still a belief set, i.e., it remains closed under logical consequence. The \emph{Success} postulate captures the point of contracting by a sentence: whenever the target sentence is revocable, it is not retained after contraction. In particular, it guarantees that strongly accepted sentences -- those that play a privileged role in the epistemic state within the systems considered here -- are preserved unless the corresponding consistency statement is first retracted (as explained in Remark~\ref{rem-RCie}).  \emph{Inclusion} expresses the requirement of \emph{minimal change} by ruling out the introduction of new beliefs, thus preserving as much of the original belief set as possible. The \emph{Failure} postulate highlights the robustness of \emph{strong acceptance}: sentences regarded as consistent are not removable by contracting them. \emph{Relevance} refines minimal change by requiring that only sentences that matter for deriving the sentences to be retracted are removed, yielding a conservative and context-sensitive change. Finally, \emph{Extensionality} implements the \emph{irrelevance of syntax}: logically equivalent sentences must be treated alike, so that the postulate coincides with its standard AGM counterpart.

It is crucial to recognize that while these postulates are grounded in general principles of rationality, their applicability is context-sensitive. Not all criteria are universally desirable across every scenario of belief change. Conflicts may arise between these principles (as it is discussed by \cite{cost} regarding the very context of Paraconsistent Belief Revision), necessitating a clear prioritization within the epistemological framework to guide the dynamics of belief revision in a consistent and contextually appropriate manner.

\begin{remarks} \label{rem-RCie} (1) In~\cite[Subsection~5.1]{tes.con.rib.2016} the variety of (new) epistemic attitudes available in AGM$\circ$ was analyzed. In particular, when 
both $\alpha$ and $\cons\alpha$ belong to $K$, then $\alpha$ is said to be {\em strongly (or boldly) accepted} in $K$. This implies that $\alpha$ cannot be contracted from $K$.  From this, the revision (and so the expansion) of $K$ by $\neg\alpha$ produce the trivial belief set $K_\bot=\mathfrak{Fm}$. Analogously, if $\neg\alpha$ and $\cons\alpha$ belong to $K$, $\alpha$ is said to be  {\em strongly (or boldly) rejected} in $K$, which implies that $\neg\alpha$ cannot be contracted from $K$.  In this case, the revision (and so the expansion) of $K$ by $\alpha$ produce the trivial belief set $K_\bot$. This fact was explained in~\cite[p.~648]{tes.con.rib.2016} as follows:

\begin{quote}
{\em The consistency of $\alpha$ in $K$ [\ldots]  means that any propositional epistemic attitude about it is irrefutable. If the agent accepts or rejects such belief-representing sentence, $K$ will be non-revisable, respectively, by $\neg\alpha$ and $\alpha$. Furthermore, the sentence will be so entrenched in the epistemic state that to exclude it is not even a possibility.}
\end{quote}

As it was observed in~\cite[Remark~5.8]{tes.con.rib.2016}, for this approach to paraconsistent belief revision, as well as the new epistemic attitudes, to make sense, it is relevant to assume the equivalence between $\cons\alpha$ and  $\cons\neg\alpha$, as well as the equivalence between $\alpha$ and  $\neg\neg\alpha$. From this, boldly rejecting $\alpha$ (i.e., $(\neg\alpha \land \cons\alpha) \in K$) is equivalent to  boldly accepting $\neg\alpha$ (i.e., $(\neg\alpha \land \cons\neg\alpha) \in K$). Analogously, boldly rejecting $\neg\alpha$ (i.e., $(\neg\neg\alpha \land \cons\neg\alpha) \in K$) is equivalent to  boldly accepting $\alpha$ (i.e., $(\alpha \land \cons\alpha) \in K$).  These features fully justify the use of \cbr\ and \rcbr. \\[1mm]
(2) It should be observed that, the fact of $\alpha$ being boldly accepted (or boldly rejected)  in a belief set $K$ under logic \rcbr, {\em does not mean} that this situation will last forever: $\alpha$ (or $\neg\alpha$, respectively) are irrevocable in $K$, but it is not necessarily the case for $\cons\alpha$ itself, since nothing guarantees that $\cons\cons\alpha \in K$. This is a consequence of the fact that neither $(\alpha \land\cons\alpha) \to \cons\cons\alpha$ nor $(\neg\alpha \land\cons\alpha) \to \cons\cons\alpha$ are valid in general in \rcbr, recall Corollary~\ref{cp-not-valid}. Hence, if we contract $K$ with respect to $\cons\alpha$, it is possible to remove  $\alpha$  (or $\neg\alpha$, respectively) afterwards, or even to consider (perhaps  for the sake of argument) both sentences $\alpha$ and $\neg\alpha$, without trivializing, in the new  belief set   $K \div \cons\alpha$. If $\alpha$ is just consistent in $K$ (that is, $\circ\alpha \in K$ but $\alpha \notin K$ and $\neg\alpha \notin K$) the same argument can be applied, and so $\circ\alpha \notin K \div \cons\alpha$ in general.
\\[1mm]
%
%
(3) Let us analyze now the case of \rcie. Note that 
$\cons\cons\alpha$ is a theorem of \rcie, for every formula $\alpha$.  From this, $\cons\cons\alpha \in K$, for every belief set $K$ over \rcie\ and for every formula $\alpha$. But then, if $\div$ is an extensional AGM$\circ$ contraction over \rcie\ it follows, by {\em $\div$failure}, that $K \div \cons\alpha = K$, for every $K$ and every $\alpha$. This feature of \rcie, which implies that $\cons\alpha$ is irrevocable in $K$ for every $K$ and every $\alpha$, has of course deep consequences for the beliefs dynamics based on \rcie: different to \rcbr, if  $\alpha$ is boldly accepted (or boldly rejected) in $K$, then this situation will last forever.
\end{remarks}

To facilitate the proof of Theorem~\ref{cont:entrenchment}, it is worth observing that any contraction satisfying the previously stated postulates also satisfies the \emph{Vacuity} and \emph{Recovery} postulates, as stated below.

\begin{proposition}
If $\div$ is a contraction in \rcbr\ as in Definition~\ref{def:postcontbola} then it satisfies, in addition, the following postulates:
\begin{description}
\item {\bf ($\div$vacuity)} If $\alpha \not\in K$ then $K \div \alpha = K$.
\item  {\bf ($\div$recovery)} $K \subseteq  (K \div \alpha)+ \alpha$.
\end{description}
\end{proposition}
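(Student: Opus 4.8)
The plan is to derive both postulates from the six postulates of Definition~\ref{def:postcontbola}, using only the abstract properties of the consequence operator $Cn$ of \rcbr\ stated in Section~2 (Tarskian, compact, with the deduction metatheorem, the latter understood for \rcbr\ via Definition~\ref{def:der:RL}), together with the fact that the positive fragment of \rcbr\ is classical: axioms ({\bf Ax}1)--({\bf Ax}9) plus (MP) axiomatise classical positive logic, so in particular Peirce's law $((\alpha\to\beta)\to\alpha)\to\alpha$ is a theorem of \rcbr. Throughout I write $A+\varphi$ for $Cn(A\cup\{\varphi\})$, and recall that belief sets are $Cn$-closed.

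For ($\div$vacuity), I would suppose $\alpha\notin K$. One inclusion, $K\div\alpha\subseteq K$, is exactly ($\div$inclusion). For the converse, assume towards a contradiction that some $\beta\in K\setminus(K\div\alpha)$. Applying ($\div$relevance) to $\beta$ yields a set $X$ with $K\div\alpha\subseteq Cn(X)\subseteq K$, $\alpha\notin Cn(X)$ and $\alpha\in Cn(X)+\beta$. Since $Cn(X)\subseteq K$ and $\beta\in K$ we have $Cn(X)\cup\{\beta\}\subseteq K$, whence by monotonicity $Cn(X)+\beta=Cn(Cn(X)\cup\{\beta\})\subseteq Cn(K)=K$; thus $\alpha\in K$, contradicting the hypothesis. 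Hence $K\subseteq K\div\alpha$, and the two sets coincide.

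For ($\div$recovery), i.e. $K\subseteq(K\div\alpha)+\alpha$, I would first note that, by the deduction metatheorem together with ($\div$closure) (so that $K\div\alpha=Cn(K\div\alpha)$), it suffices to prove that $\alpha\to\beta\in K\div\alpha$ for every $\beta\in K$. Fix $\beta\in K$; since $\beta\to(\alpha\to\beta)$ is an instance of ({\bf Ax}1) and $K$ is closed, $\alpha\to\beta\in K$. If $\alpha\to\beta\notin K\div\alpha$, then $\alpha\to\beta\in K\setminus(K\div\alpha)$, and ($\div$relevance) applied to $\alpha\to\beta$ gives $X$ with $K\div\alpha\subseteq Cn(X)\subseteq K$, $\alpha\notin Cn(X)$ and $\alpha\in Cn(X)+(\alpha\to\beta)$. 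By the deduction metatheorem, $(\alpha\to\beta)\to\alpha\in Cn(X)$; since $Cn(X)$ contains every theorem of \rcbr\ (in particular Peirce's law) and is closed under (MP), it follows that $\alpha\in Cn(X)$, contradicting $\alpha\notin Cn(X)$. Therefore $\alpha\to\beta\in K\div\alpha$ for all $\beta\in K$, which is the desired inclusion.

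The $Cn$-bookkeeping in both arguments is routine; the one step that genuinely uses the structure of \rcbr\ is the appeal to Peirce's law in the recovery argument — equivalently, the fact that from $(\alpha\to\beta)\to\alpha$ one derives $\alpha$, which rests on axiom ({\bf Ax}9), namely $(\alpha\to\beta)\vee\alpha$, together with the $\vee$-elimination axiom ({\bf Ax}8). Note also that neither argument involves the negation or consistency connectives, so the global rule $(E_\neg)$ plays no role and the use of the deduction metatheorem in the form secured by Definition~\ref{def:der:RL} is unproblematic.
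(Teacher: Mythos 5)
Your proof is correct and follows essentially the same route as the paper, which simply cites the classical arguments (vacuity from ($\div$relevance) and ($\div$inclusion); recovery from ($\div$relevance) via Hansson's textbook proof, using $(\alpha\to\beta)\to\alpha\vdash_{\rcbr}\alpha$ and the deduction metatheorem). You have merely written out in full the steps the paper leaves to the reference, and your identification of the two genuinely \rcbr-specific ingredients --- Peirce's law via ({\bf Ax}8)--({\bf Ax}9) and the DMT as secured by Definition~\ref{def:der:RL} --- matches exactly what the paper flags.
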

\begin{proof}
{\em $\div$vacuity} is, as in the case of classical contraction over \cpl, a direct consequence of {\em $\div$relevance}  and {\em $\div$inclusion}. In turn, {\em $\div$recovery} follows from {\em $\div$relevance} as in the case of classical contraction over \cpl, as proved in~\cite[p.~114]{hans:99}, taking into account that $(\alpha \to \beta) \to \alpha \vdash_{\rcbr} \alpha$ and \rcbr\ satisfies the deduction metatheorem (DMT), as in the case of \cpl.
\end{proof}

\

\emph{Vacuity} states that if the sentence to be retracted is not in the belief set, then nothing needs to be removed: contracting by it leaves the belief set unchanged. \emph{Recovery}, in turn, says that if one first contracts by a sentence and then expands again by that very sentence, one gets back the original belief set. These were the postulates originally adopted in the AGM framework.  

However, as emphasized in~\cite{recovery}, \emph{recovery} has been widely debated, since it may yield consequences that many authors regard as counterintuitive. For this reason, alternative constraints have been proposed, including \emph{relevance} \cite{relevance} -- the option adopted in the original AGM$\circ$ presentation. Still, for certain background logics (classical logic among them), \emph{vacuity} and \emph{recovery} turn out to be equivalent to \emph{relevance} once combined with the other standard postulates \cite{wassermann}. As shown below, the same equivalence holds for \rcbr. This phenomenon -- sometimes described as the \emph{pertinacity of recovery} \cite{hans:99} -- suggests that \emph{recovery} can arise as a derived feature of AGM-style belief dynamics.

\begin{proposition} \label{relev-prop}
If $\div: Th(\rcbr)\times \mathfrak{Fm} \longrightarrow Th(\rcbr)$ is an operator satisfying
{\em $\div$closure}, {\em $\div$inclusion}, {\em $\div$vacuity} and  {\em $\div$recovery}, then $\div$ satisfies  {\em $\div$relevance}.
\end{proposition}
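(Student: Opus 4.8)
The plan is to run the classical Hansson-style argument that \emph{$\div$recovery}, in the presence of \emph{$\div$closure}, \emph{$\div$inclusion} and \emph{$\div$vacuity}, forces \emph{$\div$relevance}, checking that every step uses only tools the paper already grants: the deduction metatheorem for $Cn_{\rcbr}$ (assumed in Section~2, and invoked for \rcbr\ in the previous proof), the four postulates in the hypothesis, and the positive axioms ({\bf Ax}1)--({\bf Ax}9), which make the positive fragment of \rcbr\ classical. Crucially the paraconsistent $\neg$ and the operator $\circ$ will play no role whatsoever, so this is literally the classical proof.

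First I would fix $\beta \in K \setminus (K \div \alpha)$ and record two preliminary facts. Fact (a): $\alpha \in K$; otherwise \emph{$\div$vacuity} gives $K \div \alpha = K \ni \beta$, a contradiction. Consequently, since $\alpha \to (\beta \to \alpha)$ is ({\bf Ax}1) and $K$ is closed, $\beta \to \alpha \in K$. Fact (b): $\alpha \to \beta \in K \div \alpha$. Indeed, by \emph{$\div$recovery} $\beta \in K \subseteq (K \div \alpha) + \alpha = Cn\big((K \div \alpha)\cup\{\alpha\}\big)$, so by the deduction metatheorem $\alpha \to \beta \in Cn(K \div \alpha)$, which equals $K \div \alpha$ by \emph{$\div$closure}.

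Then I would propose the witness $X := (K \div \alpha) \cup \{\beta \to \alpha\}$ and verify the three required conditions. Plainly $K \div \alpha \subseteq Cn(X)$. Also $Cn(X) \subseteq K$, because $K \div \alpha \subseteq K$ by \emph{$\div$inclusion}, $\beta \to \alpha \in K$ by Fact~(a), and $K = Cn(K)$. And $\alpha \in Cn(X)+\beta = Cn(X \cup \{\beta\})$ by modus ponens from $\beta \to \alpha$ and $\beta$. The one remaining, and genuinely delicate, point is $\alpha \notin Cn(X)$. Here I argue by contradiction: if $\alpha \in Cn\big((K \div \alpha)\cup\{\beta \to \alpha\}\big)$, then by the deduction metatheorem and \emph{$\div$closure}, $(\beta \to \alpha) \to \alpha \in K \div \alpha$. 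A short positive-logic derivation --- using ({\bf Ax}9) in the instance $(\beta \to \alpha) \vee \beta$, together with ({\bf Ax}6)--({\bf Ax}8) and MP --- shows $(\beta \to \alpha) \to \alpha \vdash_{\rcbr} \alpha \vee \beta$, whence $\alpha \vee \beta \in K \div \alpha$. Now combining this with $\alpha \to \beta \in K \div \alpha$ from Fact~(b), via ({\bf Ax}8) (with $\beta \to \beta$ as the second premise), yields $\beta \in Cn(K \div \alpha) = K \div \alpha$, contradicting the choice of $\beta$. This establishes $\alpha \notin Cn(X)$, so $X$ witnesses \emph{$\div$relevance} for $\beta$ and $\alpha$.

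The main obstacle is exactly this last step: one has to notice that the extra ingredient $\alpha \to \beta$ delivered by recovery is precisely what upgrades the innocuous $\alpha \vee \beta$ (which alone is consistent with $\beta \notin K\div\alpha$) into $\beta$. Everything else is routine bookkeeping with the deduction metatheorem and the classical positive axioms of \rcbr, and the argument transfers verbatim to \rcie\ since it never touches $\neg$ or $\circ$.
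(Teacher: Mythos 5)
Your proof is correct and is essentially the paper's argument: the paper simply defers to Hansson's classical proof (pp.~114--115), noting exactly the three derivabilities you use --- $\alpha \vdash \beta \to \alpha$, \ $(\beta\to\alpha)\to\alpha \vdash \alpha\vee\beta$, \ $\alpha\vee\beta,\ \alpha\to\beta \vdash \beta$ --- together with the deduction metatheorem for \rcbr. You have merely written out in full the details the paper leaves to the reference, with the same witness $X=(K\div\alpha)\cup\{\beta\to\alpha\}$.
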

\begin{proof} It is proven as in the case of classical contraction over \cpl, see for instance~\cite[p.~114--115]{hans:99}, taking into account that, as in the case of \cpl:   $\alpha \vdash_{\rcbr} \beta \to \alpha$; $(\beta \to \alpha) \to \alpha \vdash_{\rcbr} \alpha \vee \beta$;   $\alpha \vee \beta, \alpha \to \beta \vdash_{\rcbr} \beta$;   and \rcbr\ satisfies the deduction metatheorem DMT.
\end{proof}

\

\noindent
Moreover, to formally capture the intuitive idea that the most entrenched beliefs within an epistemic state are precisely those that should be preserved over less entrenched ones -- as reflected by contraction based on epistemic entrenchment -- it becomes necessary, at the level of postulates, to consider additional conditions related to contraction by conjunctions. Indeed, there are two additional postulates in classical AGM, usually called {\em supplementary G\"ardenfors postulates}, concerning the relationship between the contraction with a conjunction $\alpha \land \beta$ and the respective contraction with $\alpha$ and $\beta$, namely:

\begin{description}
\setlength\itemsep{0em}
\item {\bf ($\div$conjunctive overlap)} $(K \div \alpha) \cap  (K \div \beta) \subseteq K \div (\alpha \land \beta)$.
\item {\bf ($\div$conjunctive inclusion)} If $\alpha \notin  K \div (\alpha \land \beta)$ then $K \div (\alpha \land \beta) \subseteq K \div \alpha$.
\end{description}

\

In other words, \emph{Conjunctive Overlap} requires that any belief belonging to both $(K \div \alpha)$ and $(K \div \beta)$ must also belong to $K \div (\alpha \land \beta)$. \emph{Conjunctive Inclusion}, on the other hand, ensures that when contracting $K$ with respect to $\alpha \land \beta$, at least $\alpha$ or $\beta$ must be rejected. This reflects the intuitive idea that giving up a conjunction necessarily involves giving up at least one of its conjuncts. Moreover, if $\alpha$ is the sentence rejected, this occurs because $\alpha$ is less epistemically entrenched than $\beta$, ``making the minimal change required to contract $K$ by $\alpha \land \beta$ closely related to the minimal change needed to reject $\alpha$ itself'' \cite{Gardenfors1988}.

These two additional postulates for contraction, together with the usual ones, characterize exactly the change operators which are defined by means of an epistemic entrenchment. As we shall see in the next section,  the notion of epistemic entrenchment we are proposing for belief change in AGM$\circ$ based on \rcbr\ induces a contraction operator which satisfies a weaker form of the first supplementary postulate above, namely:

\begin{description}
\setlength\itemsep{0em}
\item {\bf ($\div$weak conjunctive overlap)} If either $\vdash_{\rcbr}\alpha$ or  $\vdash_{\rcbr}\beta$ or  ($R_K(\alpha)$ and $R_K(\beta)$) then   $(K \div \alpha) \cap  (K \div \beta) \subseteq K \div (\alpha \land \beta)$.
\end{description}

Such weakening should not come as a surprise, given that we are now dealing with new epistemic attitudes, in which strongly accepted beliefs are regarded as the most entrenched ones -- alongside tautologies, which together constitute the so-called irrevocable belief-representing sentences. An intuitive way to understand \emph{weak conjunctive overlap} is to observe that the conclusion of the postulate (namely, its traditional version) fails only if both conjuncts $\alpha$ and $\beta$ are not tautologies, yet at least one of them is regarded as consistent and, therefore, irrevocable. Now, if one of the conjuncts (say, $\alpha$) is a tautology, then it trivially follows that $(K \div \alpha) \cap (K \div \beta) \subseteq K \div (\alpha \land \beta)$ -- regardless of the epistemic status of $\beta$ -- since, in such case, $(K \div \alpha)=K$ and $\beta$ is logically equivalent to $\alpha\land\beta$ (and vice versa).

Before moving on to the next section -- where we define contraction in AGM{$\circ$} by means of epistemic entrenchment -- it is worth stressing a distinctive feature of the AGM{$\circ$} model: it supports alternative ways of defining revision, beyond the standard one induced by the Levi identity. Usually, to revise $K$ by $\alpha$ one first contracts $K$ by $\neg\alpha$, so that $\alpha$ can be added without contradiction; the resulting belief set is then expanded by $\alpha$.

Paraconsistent belief revision, as developed in~\cite{tes.con.rib.2016}, also admits alternative constructions. In particular, the \emph{Reverse Levi Identity} reverses the order of the two steps: one first expands $K$ by $\alpha$ and then contracts the resulting (possibly contradictory) epistemic state by $\neg\alpha$. This yields what is called \emph{external revision} (in contrast with the usual \emph{internal revision}). More generally, \emph{semi-revision} can be understood as a refinement of external revision in which consolidation is achieved not necessarily by contracting $\neg\alpha$, but by contracting the explicit contradiction(s) that arise after expansion. As a consequence, $\alpha$ need not be preserved: the consolidation step may retract whatever sentences are required to eliminate those contradiction(s), possibly including $\alpha$ itself.

Such distinct revisions -- where a contradictory state may occur as an intermediate stage -- are often discussed in the context of belief base models, in which belief states are not deductively closed and contradictions need not lead to triviality. A distinctive contribution of~\cite{tes.con.rib.2016} is to articulate these operators within belief sets closed under paraconsistent logics, namely the \lfis. The present paper takes a further step by formulating them in terms of epistemic entrenchment.

\section{Epistemic entrenchment in extensional AGM$\circ$ over  \rcbr}
 
Epistemic entrenchment, introduced by Gärdenfors (see~\cite{Gardenfors1988}), is based on the idea that contractions in a belief set $K$ should be guided by an ordering of sentences according to their epistemic importance. This ordering, determined prior to contraction, is influenced by factors such as explanatory power, overall informational value, and probabilities, depending on the context of application. As noted by Rott \cite{Rott1992}, a simpler interpretation is that, if $\alpha$ is less entrenched than $\beta$, it means ``it is easier to discard $\alpha$ than $\beta$.''

The formal tool used by Gärdenfors to represent epistemic entrenchment is a binary relation that satisfies specific postulates.

In this section, the notation introduced in Definition~\ref{unrev} will be frequently used. The postulates for epistemic entrenchment for the classical AGM can be adapted to \rcbr\ as follows:

\begin{definition} \label{def-EE}
Let $K$ be a belief set in \rcbr. A relation ${\leq} \subseteq K \times K$ is an {\em epistemic entrenchment for $K$} if it satisfies the following properties:

\begin{description}
\setlength\itemsep{0em}
\item {\bf (EE1)} If $\alpha \leq \beta$ and $\beta \leq \gamma$, then $\alpha \leq \gamma$ \hspace{1cm}  {\em (transitivity)}.
\item {\bf (EE2)} If $\alpha \vdash_{\rcbr} \beta$ or ${\circ}\beta \in K$, then $\alpha \leq \beta$ \hspace{1cm}  {\em (dominance)}.
\item {\bf (EE3)} Either $\alpha \leq \alpha \land \beta$ or $\beta \leq \alpha \land \beta$ \hspace{1cm}  {\em (conjunctiveness)}.
\item {\bf (EE4)} If $K \neq K_\bot$, then $\alpha \notin K$ iff $\alpha \leq \beta$ for all $\beta$ \hspace{1cm}  {\em (minimality)}.
\item {\bf (EE5)} If $\beta \leq \alpha$ for all $\beta$, then  $I_K(\alpha)$ \hspace{1cm}  {\em (maximality)}.
\end{description}
\end{definition}

\noindent As usual, we will write $\alpha < \beta$ to denote that $\alpha \leq \beta$ and $\beta \not\leq \alpha$.

It is worth emphasizing that, in our framework, the most significant deviations from the classical entrenchment paradigm appear in postulates (EE2) and (EE5). (EE1), as usual, ensures that epistemic entrenchment is transitive. Traditionally, (EE2) states that if $\alpha$ entails $\beta$, then $\alpha$ must be at most as entrenched as $\beta$, since one cannot retract $\beta$ without also giving up $\alpha$. In our framework, this condition is reinforced to reflect the epistemic status of strongly accepted sentences in the AGM$\circ$ system. Specifically, if $\circ\beta\in K$, then $\beta$ is considered a consistent belief -- an indicator that the agent is committed not just to $\beta$, but to its stability under revision. Thus, any sentence $\alpha$ must be at most as entrenched as $\beta$, even if there is no formal derivability from $\alpha$ to $\beta$. This extension aligns with the idea that strongly accepted beliefs (i.e., the acceptance of both $\beta$ and $\circ\beta$ play a foundational role in the present paradigm of belief dynamics, and should occupy the top levels of the entrenchment hierarchy. (EE3) captures the intuition that discarding a conjunction requires discarding at least one of its conjuncts. Accordingly, a conjunction should be at least as entrenched as one of its components. (EE4) stipulates that the least entrenched beliefs are precisely those absent from the belief set. (EE5) characterizes the top of the entrenchment hierarchy: if a belief $\alpha$ is more entrenched than every other belief in $K$, then it must be irrevocable -- either because it is a logical theorem or because its consistency is explicitly asserted in the belief set (i.e., $\circ\alpha\in K$). Together with (EE2), this condition plays a crucial role in establishing the full equivalence between maximal entrenchment and irrevocability. This result is formally proven in Proposition \ref{prop-EE}.

\begin{proposition} \label{prop-EE} Let $K$ be a belief set in \rcbr, and let ${\leq} \subseteq K \times K$ be an  epistemic entrenchment for $K$ Then:
$\beta \leq \alpha$ for all $\beta$ iff $I_K(\alpha)$.
\end{proposition}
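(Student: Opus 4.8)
The plan is to prove the biconditional $\beta \leq \alpha$ for all $\beta$ iff $U_K(\alpha)$ by establishing each direction separately, relying essentially on the postulates (EE1)--(EE5) together with the specific properties of \rcbr\ (in particular the fact that $\circ\alpha \equiv \circ\neg\alpha$ and $\alpha \equiv \neg\neg\alpha$, and the behaviour of the consistency operator discussed in Proposition~\ref{prop-Cbr} and Remarks~\ref{rem-RCbr}). The right-to-left direction already holds: it is exactly (EE5). So the real content is the left-to-right implication, namely: if $\beta \leq \alpha$ for every $\beta \in K$, then $U_K(\alpha)$, i.e.\ either $\vdash_{\rcbr}\alpha$ or $\circ\alpha \in K$.

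First I would dispose of the trivial case $K = K_\bot$: here every formula is in $K$, in particular $\circ\alpha \in K$, so $U_K(\alpha)$ holds and there is nothing to prove. Hence assume $K \neq K_\bot$. Now suppose $\beta \leq \alpha$ for all $\beta$. Applying (EE4) (minimality) in the contrapositive: since $\alpha \leq \beta$ fails for some $\beta$ is \emph{not} the situation — actually we need $\alpha \in K$ first. From $\beta \leq \alpha$ for all $\beta$, taking $\beta = \alpha$ gives $\alpha \leq \alpha$, which forces $\alpha \in K$ because $\leq$ is a relation on $K \times K$. (Alternatively: if $\alpha \notin K$ then by (EE4) we would have $\alpha \leq \beta$ for all $\beta$, but not conversely; one must check $\beta \leq \alpha$ for all $\beta$ is incompatible with $\alpha \notin K$ unless $K$ is degenerate — this is where (EE4) combined with antisymmetry-type reasoning via (EE2) enters.) The key step is then: assuming $\alpha \in K$ and $\beta \leq \alpha$ for all $\beta$, and assuming toward a contradiction that $R_K(\alpha)$ (i.e.\ $\nvdash_{\rcbr}\alpha$ and $\circ\alpha \notin K$), derive a contradiction. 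Here I would look for a witness $\beta_0 \in K$ with $\alpha < \beta_0$, violating the hypothesis. A natural candidate is a tautology, say $\top := \alpha \vee \neg\alpha$ (valid by ({\bf Ax}10)), which is in $K$ and is maximally entrenched by (EE2) since $\gamma \vdash_{\rcbr} \top$ for any $\gamma$; I would then argue that $\top \leq \alpha$ together with (EE2) and the definition of $U_K$ forces $U_K(\alpha)$, contradicting $R_K(\alpha)$ — this is essentially the converse reading of (EE5) sharpened using the precise statement of (EE2)'s "dominance" clause.

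The main obstacle I anticipate is making the argument that $\beta \leq \alpha$ for all $\beta$ genuinely forces $\circ\alpha \in K$ (when $\alpha$ is not itself a theorem), rather than merely re-deriving weaker information. The postulate (EE5) only gives one direction; to go the other way one must exploit that the \emph{top} of the entrenchment ordering consists exactly of the unrevocable sentences, and the argument likely proceeds by: (a) showing any theorem $\tau$ satisfies $\gamma \leq \tau$ for all $\gamma$ via (EE2); (b) showing if $\circ\alpha \in K$ then likewise $\gamma \leq \alpha$ for all $\gamma$ via (EE2)'s clause "${\circ}\beta \in K$ implies $\alpha \leq \beta$"; and (c) conversely, if $\alpha$ sits at the top but $R_K(\alpha)$, using (EE3) and (EE4) to locate some $\beta$ strictly above $\alpha$, contradicting maximality of $\alpha$. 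Step (c) is the delicate one and may require invoking that $K$ is deductively closed together with the failure $\nvdash_{\rcbr}\alpha$ to guarantee the existence of such a separating $\beta$ (for instance some $\beta$ with $\beta \vdash_{\rcbr} \alpha$ but $\alpha \nvdash_{\rcbr} \beta$, combined with (EE2) and (EE4)). I would expect the final write-up to be short, citing (EE2), (EE4), (EE5) and basic closure properties of $K$, with the bulk of the care going into the case analysis on whether $\vdash_{\rcbr}\alpha$ holds.
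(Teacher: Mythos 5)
You have the two directions of the biconditional swapped, and this derails the whole plan. Writing the statement as $(\forall\beta:\beta\leq\alpha)\iff U_K(\alpha)$, the left-to-right implication --- ``if $\beta\leq\alpha$ for all $\beta$, then $U_K(\alpha)$'' --- is \emph{verbatim} the postulate {\em (EE5)}; there is nothing to prove there. The direction that actually requires an argument is the right-to-left one: $U_K(\alpha)$ implies $\beta\leq\alpha$ for all $\beta$. You declare that right-to-left ``is exactly (EE5)'' and then devote essentially the entire proposal to establishing left-to-right, i.e.\ to re-deriving an axiom. Worse, your step (c) --- extracting $\circ\alpha\in K$ (or $\vdash_{\rcbr}\alpha$) from maximal entrenchment using only {\em (EE1)}--{\em (EE4)} --- cannot succeed: nothing in those four postulates mentions $\circ$ or forces a non-theorem sitting at the top of the ordering to be consistent in $K$; that is precisely why {\em (EE5)} is postulated as an independent condition. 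So the ``delicate'' step you anticipate is both impossible and unnecessary.

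The good news is that the ingredients for the direction that does need proof are already present in your items (a) and (b): if $\vdash_{\rcbr}\alpha$ then $\beta\vdash_{\rcbr}\alpha$ for every $\beta$, so $\beta\leq\alpha$ by the first clause of {\em (EE2)}; and if $\circ\alpha\in K$ then $\beta\leq\alpha$ for every $\beta$ directly by the second clause of {\em (EE2)}. Together with citing {\em (EE5)} for the converse, that two-line case split is the paper's entire proof. None of the extra machinery you invoke --- the $K=K_\bot$ case, $\circ\alpha\equiv\circ\neg\alpha$, {\em (EE3)}, {\em (EE4)}, tautologies as separating witnesses --- is needed.
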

\begin{proof} 
It is  an immediate consequence of {\em (EE2)} and {\em (EE5)}. 
Indeed, the `only if' part follows from {\em (EE5)}. Now, observe that if  $\vdash_{\rcbr} \alpha$ then $\beta\vdash_{\rcbr} \alpha$ for every $\beta$.  Hence, if $I_K(\alpha)$ then  $\beta \leq \alpha$ for all $\beta$, by {\em (EE2)}.
\end{proof}

\begin{remarks} \label{rem-EE}
(1) Observation 2.75 in~\cite{hans:99} holds in our framework. Namely: Let $\leq$ be a relation that satisfies  {\em (EE1)}. Then it satisfies:
\begin{enumerate}
\item $x \leq y$ and $y < z$ implies $x < z$.
\item $x < y$ and $y \leq z$ implies $x < z$.
\item $x < y$ and $y < z$ implies $x < z$ (quasi-transitivity).
\end{enumerate}
(2) Observation 2.48 in~\cite{hans:99} also holds in our framework. Namely: Let $\leq$ be a relation that satisfies  {\em (EE1)},  {\em (EE2)} and  {\em (EE3)}. Then it satisfies:\\[1mm]
Either $\alpha \ \leq \beta$ or $\beta \leq \alpha$ (connectivity).\\[1mm]
(3) Observation 2.92 in~\cite{hans:99} also holds in our framework. Namely: Let $\leq$ be a relation that satisfies  {\em (EE1)} and  {\em (EE2)}. Then it also satisfies:\\[1mm]
If $\alpha \equiv \alpha'$ and  $\beta \equiv \beta'$ then $\alpha \leq \beta$ iff $\alpha' \leq \beta'$ (intersubstitutivity).\\[1mm]
(4) Observation 2.93 in~\cite{hans:99} also holds in our framework. Namely: Let $\leq$ be a relation that satisfies  {\em (EE1)},  {\em (EE2)} and  {\em (EE3)}. Then it also satisfies:
\begin{enumerate}
\item If $\alpha \land \beta \leq \delta$, then either $\alpha \leq \delta$ or $\beta \leq \delta$.
\item  If $\delta < \alpha$ and $\delta < \beta$, then  $\delta < \alpha \land \beta$  (conjunction up).
\end{enumerate}
The proof of items (2)-(4) is analogous to the classical case, by observing that our version of {\em (EE2)} is stronger than the classical one.
\end{remarks}

\

\noindent Epistemic entrenchment and contraction can be related in our framework as follows:

\begin{description}
\setlength\itemsep{0em}
\item $(C\leq)$ \ \ $\alpha \leq \beta$ \ iff \ $\alpha \notin K \div (\alpha \land \beta)$ or $I_K(\alpha \land \beta)$.
\item $(G\div)$ \ \ $\beta \in K\div\alpha$ \ iff \ $\beta \in K$ and either $\alpha < \alpha \vee \beta$ or $I_K(\alpha)$.
\end{description}

\

\noindent Indeed, given an epistemic entrenchment defined for every belief set $K$ in \rcbr, it is possible to define an AGM$\circ$ contraction operator $\div$  by using $(G\div)$:

\begin{theorem}\label{cont:entrenchment}
For every belief set $K$ in \rcbr\ let $\leq$ be an epistemic entrenchment ordering for $K$ as in Definition~\ref{def-EE}. Then, the function $\div: Th(\rcbr)\times \mathfrak{Fm} \longrightarrow Th(\rcbr)$ obtained from $(G\div)$, for each $K$, is an operator satisfying the postulates for extensional  AGM$\circ$ contraction based on \rcbr\ (recall Definition~\ref{def:postcontbola}) plus the supplementary ones: {\em ($\div$weak conjunctive overlap)} and  {\em ($\div$conjunctive inclusion)}. In addition, condition $(C\leq)$ holds, for every $K$.
\end{theorem}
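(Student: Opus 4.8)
The plan is to run, postulate by postulate, the classical Gärdenfors--Makinson argument (in the form of~\cite{hans:99}) that an entrenchment-generated contraction is an AGM contraction, checking that the two features peculiar to the \rcbr-setting do not spoil the reasoning: the replacement of ``$\vdash$'' by ``$U_K(\cdot)$'' inside $(G\div)$, (EE2) and (EE5), and the fact that only the \emph{weak} form of conjunctive overlap survives. The facts about \rcbr\ that I would use, beyond the order postulates, are: \rcbr\ extends \iplp, so all lattice identities for $\land,\lor$ hold as logical equivalences (associativity, commutativity, idempotency, distributivity, and the absorption laws $\alpha\lor(\alpha\land\beta)\equiv\alpha$, $(\alpha\land\beta)\lor\alpha\equiv\alpha$); \rcbr\ satisfies the DMT and is compact; $(\alpha\to\beta)\lor\alpha$ is a theorem; and, crucially, $\alpha\equiv\beta$ implies $\circ\alpha\equiv\circ\beta$ (replacement). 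From the entrenchment side I would use Proposition~\ref{prop-EE} and the derived properties in Remarks~\ref{rem-EE} (quasi-transitivity of $<$, connectivity, intersubstitutivity, ``conjunction up''), plus the easy observation that $U_K(\gamma)$ implies $\delta\le\gamma$ for every $\delta$ (one clause of (EE2) in each case); edge cases involving formulas outside $K$ and the trivial belief set $K_\bot$ are dispatched by (EE4) exactly as in the classical setting.

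\emph{The (near-)routine postulates.} ($\div$inclusion) is immediate from $(G\div)$. For ($\div$success): if $R_K(\alpha)$ and $\alpha\in K\div\alpha$, then, since $\alpha\lor\alpha\equiv\alpha$ and $<$ is irreflexive, $(G\div)$ forces $U_K(\alpha)$, a contradiction. For ($\div$failure): $\circ\alpha\in K$ gives $U_K(\alpha)$, so $(G\div)$ puts every $\gamma\in K$ into $K\div\alpha$, and with ($\div$inclusion) this yields $K\div\alpha=K$ (likewise $K\div\alpha$ always contains every theorem, and $K_\bot\div\alpha=K_\bot$). ($\div$closure): if $K\div\alpha\vdash_{\rcbr}\beta$ then $\beta\in K$; assuming $R_K(\alpha)$, compactness gives $\beta_1,\dots,\beta_n\in K\div\alpha$ with $\beta_1\land\dots\land\beta_n\vdash_{\rcbr}\beta$, each satisfying $\alpha<\alpha\lor\beta_i$, so iterated ``conjunction up'' plus distributivity and intersubstitutivity give $\alpha<\alpha\lor(\beta_1\land\dots\land\beta_n)$, and since $\alpha\lor(\beta_1\land\dots\land\beta_n)\vdash_{\rcbr}\alpha\lor\beta$, dominance and quasi-transitivity give $\alpha<\alpha\lor\beta$, i.e.\ $\beta\in K\div\alpha$. ($\div$extensionality) is the step where the defining property of \rcbr\ is needed: if $\alpha\equiv\beta$ then replacement gives $\circ\alpha\equiv\circ\beta$, hence $U_K(\alpha)\Leftrightarrow U_K(\beta)$; together with $\alpha\in K\Leftrightarrow\beta\in K$ and $\alpha\lor\gamma\equiv\beta\lor\gamma$, intersubstitutivity (Remarks~\ref{rem-EE}(3)) gives $\alpha<\alpha\lor\gamma\Leftrightarrow\beta<\beta\lor\gamma$, and $(G\div)$ then yields $K\div\alpha=K\div\beta$.

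\emph{Vacuity, recovery, relevance, the supplementary postulates, and $(C\le)$.} For ($\div$vacuity), if $\alpha\notin K$ then either $\circ\alpha\in K$ (whence $K\div\alpha=K$ at once), or $K\ne K_\bot$ and, by (EE4) and (EE2), for each $\gamma\in K$ one has $\alpha\lor\gamma\not\le\alpha$ (else $\alpha\lor\gamma\le\alpha\le\delta$ for all $\delta$, contradicting $\alpha\lor\gamma\in K$ via (EE4)), so $\alpha<\alpha\lor\gamma$ and $\gamma\in K\div\alpha$. For ($\div$recovery) one shows $\alpha\to\gamma\in K\div\alpha$ for each $\gamma\in K$: it lies in $K$, and if $R_K(\alpha)$ then $\alpha<\alpha\lor(\alpha\to\gamma)$, because $\vdash_{\rcbr}\alpha\lor(\alpha\to\gamma)$, so $\alpha\lor(\alpha\to\gamma)\le\alpha$ would force $\delta\le\alpha$ for all $\delta$, i.e.\ $U_K(\alpha)$ by Proposition~\ref{prop-EE}; then $K\subseteq(K\div\alpha)+\alpha$ by the DMT. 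Now ($\div$relevance) follows from ($\div$closure), ($\div$inclusion), ($\div$vacuity) and ($\div$recovery) via Proposition~\ref{relev-prop}. For ($\div$conjunctive inclusion): if $\alpha\notin K\div(\alpha\land\beta)$, then either $\alpha\notin K$ (use ($\div$vacuity)), or $(G\div)$ and absorption give $\neg U_K(\alpha\land\beta)$ and $(\alpha\land\beta)\not<\alpha$, hence $\alpha\le\alpha\land\beta$ (as $\alpha\land\beta\le\alpha$ by dominance); then any $\gamma\in K\div(\alpha\land\beta)$ satisfies $\alpha\le\alpha\land\beta<(\alpha\land\beta)\lor\gamma\le\alpha\lor\gamma$, so $\alpha<\alpha\lor\gamma$ and $\gamma\in K\div\alpha$. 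For ($\div$weak conjunctive overlap) there is a three-way split: if $\vdash_{\rcbr}\alpha$ then $K\div\alpha=K$ and $\alpha\land\beta\equiv\beta$, so by ($\div$extensionality) $(K\div\alpha)\cap(K\div\beta)=K\div\beta=K\div(\alpha\land\beta)$ (symmetrically if $\vdash_{\rcbr}\beta$); if $R_K(\alpha)$ and $R_K(\beta)$, then any $\gamma$ in the intersection gives $\alpha<\alpha\lor\gamma$ and $\beta<\beta\lor\gamma$ by $(G\div)$, whence $\alpha\land\beta<\alpha\lor\gamma$ and $\alpha\land\beta<\beta\lor\gamma$ (dominance plus quasi-transitivity), and ``conjunction up'' plus distributivity give $\alpha\land\beta<(\alpha\land\beta)\lor\gamma$, i.e.\ $\gamma\in K\div(\alpha\land\beta)$. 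Finally $(C\le)$: for $\alpha\le\beta\Rightarrow$ (RHS), assume $\alpha\in K\div(\alpha\land\beta)$ and $\neg U_K(\alpha\land\beta)$; then $(\alpha\land\beta)<\alpha$ (absorption), contradicting (EE3) together with $\alpha\le\beta$ and $\alpha\land\beta\le\beta$; for the converse, connectivity reduces us to $\beta<\alpha$, where $\alpha\land\beta\le\beta<\alpha\equiv(\alpha\land\beta)\lor\alpha$ gives $\alpha\in K\div(\alpha\land\beta)$, so the RHS can hold only through $U_K(\alpha\land\beta)$; but $U_K(\alpha\land\beta)$ makes $\delta\le\alpha\land\beta\le\beta$ for all $\delta$, so $\alpha\le\beta$ --- a contradiction.

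\emph{The main obstacle.} The genuinely non-routine points are ($\div$extensionality) and ($\div$weak conjunctive overlap). Extensionality is exactly where self-extensionality of \rcbr\ must be invoked: in a generic \lfi\ one can have $\alpha\equiv\beta$ without $\circ\alpha\equiv\circ\beta$, so $U_K$ --- and hence the whole $(G\div)$ recipe --- would fail to be syntax-insensitive; closing this gap was the motivation for \rcbr. In the weak overlap one must track precisely which disjunct of $(G\div)$ is active: the classical proof collapses the ``$U_K$'' disjunct into the trivial ``$\vdash$'' case, but over \rcbr\ a formula can be unrevocable without being a theorem (because $\circ(\alpha\land\beta)\in K$ need not follow from $\circ\alpha\in K$ or $\circ\beta\in K$, and $(cp1)$--$(cp4)$ fail, Corollary~\ref{cp-not-valid}), so $K\div\alpha=K$ can coexist with $K\div(\alpha\land\beta)\subsetneq K$; this is precisely why the full conjunctive-overlap postulate has to be weakened. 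Everything else is the classical computation, legitimized by Remarks~\ref{rem-EE}, which in turn rely on the fact that our dominance axiom (EE2) is strictly stronger than its classical counterpart.
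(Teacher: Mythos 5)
Your proposal is correct and follows essentially the same route as the paper's proof: a postulate-by-postulate adaptation of the classical Hansson-style argument, using Remarks~\ref{rem-EE}, Proposition~\ref{prop-EE} and Proposition~\ref{relev-prop}, with self-extensionality of \rcbr\ (hence $U_K(\alpha)\Leftrightarrow U_K(\beta)$ for $\alpha\equiv\beta$) carrying ($\div$extensionality) and the three-way case split handling weak conjunctive overlap. The only difference is cosmetic compression (e.g.\ the $\alpha\notin K$ subcase of $(C\leq)$ is delegated to your preliminary remark on (EE4) rather than spelled out, as the paper does).
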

\begin{proof} 
Define $\div: Th(\rcbr)\times \mathfrak{Fm} \longrightarrow Th(\rcbr)$ by using $(G\div)$ from each $\leq$ over $K$, for $K \in Th(\rcbr)$. To be more precise, given $K$, $\alpha$ and $\leq$ over $K$ let\\

$K \div\alpha := \{\beta \in K \ : \ \mbox{either $\alpha < \alpha \vee \beta$ or $I_K(\alpha)$} \}\hspace*{3cm}(def\div)$\\

\noindent Let us prove that $\div$ satisfies the required properties. Our proof is an adaptation of the one for the classical case found in~\cite{hans:99}, pages 188--190.\\[2mm]
{\bf ($\div$closure):} If $I_K(\alpha)$ then $K\div \alpha=K$, by construction, which is a closed set. Now, assume that $R_K(\alpha)$, and suppose that $K\div\alpha \vdash_{\rcbr}\delta$. We need to show that $\delta \in K$ and $\alpha < \alpha \vee \delta$. \\[1mm]
Case 1: Suppose that $K\div\alpha \neq\emptyset$.
By definition of $\vdash_{\rcbr}$, either $\vdash_{\rcbr}\delta$, or there exists $\beta_1,\ldots,\beta_n \in K\div\alpha$ such that $(\beta_1 \land \ldots \land \beta_n) \to \delta$ is a theorem of \rcbr. In the first case, $\delta \in K$. In the second case, by $(def\div)$, $\beta_1,\ldots,\beta_n \in K$ and so  $K \vdash_{\rcbr}\delta$. But $K$ is a closed theory, hence $\delta \in K$ also in this case. In turn, since $\beta_i \in K\div\alpha$ and $R_K(\alpha)$ then, by $(def\div)$, $\alpha < \alpha \vee \beta_i$, for every $1 \leq i \leq n$. By Remarks~\ref{rem-EE}(4) it follows that $\alpha < (\alpha \vee \beta_1) \land \ldots \land (\alpha \vee \beta_n)$. By Remarks~\ref{rem-EE}(3), $\alpha < \alpha \vee (\beta_1 \land \ldots \land \beta_n)$. But $\beta_1 \land \ldots \land \beta_n \vdash_{\rcbr} \delta$, hence $ \alpha \vee(\beta_1 \land \ldots \land \beta_n) \vdash_{\rcbr} \alpha \vee \delta$ and so $ \alpha \vee(\beta_1 \land \ldots \land \beta_n) \leq \alpha \vee \delta$, by (EE2). By (EE1) and Remarks~\ref{rem-EE}(1.2), $\alpha < \alpha \vee \delta$.\\[1mm]
Case 2: Assume that  $K\div\alpha =\emptyset$. This means that $\vdash_{\rcbr}\delta$, hence  $\delta \in K$. Since $\vdash_{\rcbr} \alpha \vee \delta$, it follows by (EE2) that $\beta \leq \alpha \vee \delta$ for every $\beta$. Given that $R_K(\alpha)$ we infer from (EE5)  that $\beta \not\leq \alpha$, for some $\beta$. By  Remarks~\ref{rem-EE}(2), $\alpha < \beta$. Since  $\beta \leq \alpha \vee \delta$ then $\alpha <  \alpha \vee \delta$, by Remarks~\ref{rem-EE}(1.2).\\[2mm]
{\bf ($\div$sucess):} Assume that $R_K(\alpha)$. By (EE2), $\alpha \vee \alpha \leq \alpha$, hence $\alpha \not<\alpha \vee \alpha$. By $(def\div)$, $\alpha \not\in K \div\alpha$.\\[2mm]
{\bf ($\div$inclusion):} It is immediate, by $(def\div)$.\\[2mm]
{\bf ($\div$failure):} It is also immediate, by $(def\div)$.\\[2mm]
{\bf ($\div$vacuity):} Let $\alpha \notin K$ (hence $K \neq K_\bot$). By $(def\div)$, $K \div \alpha \subseteq K$. We need  to prove that $K \subseteq K \div \alpha$, so let $\beta \in K$. By (EE4), $\beta \not\leq \gamma$ for some $\gamma$. By  Remarks~\ref{rem-EE}(2), $\gamma < \beta$, while $\beta \leq \alpha \vee \beta$ by (EE2). Since $\alpha \notin K$ then $\alpha \leq \gamma$, by (EE4). By  Remarks~\ref{rem-EE}(1), $\alpha < \alpha \vee \beta$. By $(def\div)$, $\beta \in K \div\alpha$.\\[2mm]
{\bf ($\div$recovery):} Suppose that $\beta \in K$. It is enough proving that $\alpha \to \beta \in K \div \alpha$. Since $\beta \vdash_{\rcbr} \alpha \to \beta$ and $K$ is closed, $\alpha \to \beta \in K$.\\[1mm]
Case 1: Assume that $I_K(\alpha)$. Then, $\alpha \to \beta \in K \div \alpha$, by $(def\div)$. \\[1mm]
Case 2: Suppose that $R_K(\alpha)$. By (EE5), there is some $\gamma$ such that $\gamma \not\leq\alpha$. By  Remarks~\ref{rem-EE}(2), $\alpha < \gamma$. Since $\vdash_{\rcbr} \alpha \vee (\alpha \to \beta)$, it follows by (EE2) that $\gamma \leq \alpha \vee (\alpha \to \beta)$.  By  Remarks~\ref{rem-EE}(1.2), $\alpha < \alpha \vee (\alpha \to \beta)$. By $(def\div)$, $\alpha \to \beta \in K \div \alpha$.\\[2mm]
{\bf ($\div$relevance):} It follows from Proposition~\ref{relev-prop} and the properties of $\div$ already proven.\\[2mm]
{\bf ($\div$extensionality):} Assume that $\alpha \equiv \beta$. It is enough proving that $K\div\alpha \subseteq K \div\beta$. Thus, let $\delta \in K\div\alpha$. By $(def\div)$, $\delta \in K$ is such that either $\alpha < \alpha \vee \delta$ or $I_K(\alpha)$. By Remark~\ref{rem-RCbr}, $I_K(\alpha)$ iff $I_K(\beta)$. On the other hand, $\alpha \equiv \beta$ implies that $\alpha \vee \delta \equiv \beta \vee \delta$. By  Remarks~\ref{rem-EE}(3), $\alpha < \alpha \vee \delta$ iff $\beta < \beta \vee \delta$. This means that  $\delta \in K\div\beta$, by $(def\div)$.\\[2mm]
{\bf ($\div$weak conjunctive overlap):} Suppose that  $\vdash_{\rcbr}\alpha$. Hence, $\beta \equiv (\alpha \land \beta)$ and so, by {\em $\div$extensionality} (which was already proven), we infer that $K \div\beta= K \div(\alpha \land \beta)$. From this,  $(K \div \alpha) \cap  (K \div \beta) \subseteq K \div (\alpha \land \beta)$. If $\vdash_{\rcbr}\beta$ then, by a similar reasoning, we prove that $(K \div \alpha) \cap  (K \div \beta) \subseteq K \div (\alpha \land \beta)$. Now, suppose that $R_K(\alpha)$ and $R_K(\beta)$. Let $\delta \in (K \div \alpha) \cap  (K \div \beta)$.  By (EE2), $\alpha \land \beta \leq \alpha$. Since $\delta \in K \div \alpha$ then, by $(def\div)$, $\alpha < \alpha \vee \delta$.  By  Remarks~\ref{rem-EE}(1.1), $\alpha \land \beta < \alpha \vee \delta$. Analogously it is proven that $\alpha \land \beta < \beta \vee \delta$. By  Remarks~\ref{rem-EE}(4.2), $\alpha \land \beta < (\alpha \vee \delta) \land (\beta \vee \delta)$. Using  Remarks~\ref{rem-EE}(3) it follows that $\alpha \land \beta < (\alpha \land \beta) \vee \delta$. Given that $\delta \in K$, it follows that $\delta \in K \div (\alpha \land \beta)$,  by $(def\div)$.\\[2mm]
{\bf ($\div$conjunctive inclusion):} Assume that $\alpha \notin K \div (\alpha \land \beta)$. Then, $\nvdash_{\rcbr}\alpha$.  By $(def\div)$, $R_k(\alpha \land\beta)$.\\[1mm]
Case 1: $\alpha \notin K$. Then $\alpha \land\beta \notin K$ and so, by {\em $\div$vacuity} (which was already proven), $K \div (\alpha \land \beta) = K = K \div \alpha$.\\[1mm]
Case 2: $\alpha \in K$. Let $\delta \notin K \div\alpha$. We want to prove that $\delta\notin K \div (\alpha \land \beta)$. If $\delta \not\in K$, this is immediate. Now, assume that $\delta \in K$. By (EE2), $(\alpha \land \beta) \vee \delta \leq \alpha \vee \delta$. Since $\delta \in A$ and $\delta \notin  K \div\alpha$ then, by $(def\div)$, $\alpha \not< \alpha \vee \delta$. By  Remarks~\ref{rem-EE}(2), $\alpha \vee \delta \leq \alpha$. Since $\alpha \in A$ and $\alpha \notin  K \div(\alpha \land \beta)$ then, by a similar reasoning, it follows that $(\alpha \land \beta) \vee \alpha \leq \alpha \land \beta$. By  Remarks~\ref{rem-EE}(3), $\alpha \leq \alpha \land \beta$. From $(\alpha \land \beta) \vee \delta \leq \alpha \vee \delta$, $\alpha \vee \delta \leq \alpha$, and $\alpha \leq \alpha \land \beta$ it follows that $(\alpha \land \beta) \vee \delta \leq \alpha \land \beta$, by (EE1). That is, $\alpha \land \beta \not<  (\alpha \land \beta) \vee \delta$. By $(def\div)$ it follows that $\delta \notin K \div (\alpha \land \beta)$, since $R_k(\alpha \land\beta)$.\\[2mm]
{\bf Condition $(C\leq)$:} Suppose first that $\alpha \leq \beta$. Assuming that $\alpha \in K \div (\alpha \land \beta)$, we need to show that $I_K(\alpha \land \beta)$. By $(def\div)$, $\alpha \in K$ and either $\alpha \land\beta < (\alpha \land \beta) \vee \alpha$ or $I_K(\alpha \land \beta)$. By (EE3), either $\alpha \leq \alpha \land \beta$ or $\beta \leq \alpha \land \beta$. If $\beta \leq \alpha \land \beta$ then, from $\alpha \leq \beta$ and (EE1), it follows that $\alpha \leq \alpha \land \beta$. That is, in both cases $\alpha \leq \alpha \land \beta$. By  Remarks~\ref{rem-EE}(3), $(\alpha \land \beta) \vee \alpha \leq \alpha \land \beta$, which implies that $\alpha \land \beta \not< (\alpha \land \beta) \vee \alpha$. This shows that $I_K(\alpha \land \beta)$, as desired. Conversely, assume that either $\alpha \notin K \div (\alpha \land \beta)$ or $I_K(\alpha \land \beta)$. We want to prove that $\alpha \leq \beta$.\\[1mm]
Case 1: Suppose that $I_K(\alpha \land \beta)$. By  Proposition~\ref{prop-EE}, $\delta \leq \alpha \land \beta$ for every $\delta$. In particular, $\alpha \leq \alpha \land \beta$. Since $\alpha \land \beta \leq \beta$, by (EE2), then $\alpha \leq \beta$, by (EE1).\\[1mm]
Case 2: Suppose that $R_K(\alpha \land \beta)$. Then, $\alpha \notin K \div (\alpha \land \beta)$. By $(def\div)$,  either $\alpha \not\in K$ or $\alpha \land \beta \not< (\alpha \land \beta) \vee \alpha$ or $R_K(\alpha \land \beta)$. This implies that either  $\alpha \not\in K$ or $\alpha \land \beta \not< (\alpha \land \beta) \vee \alpha$. If $\alpha \not\in K$ then $K \neq K_\bot$ and so $\alpha \leq \beta$, by (EE4). Finally, if $\alpha \land \beta \not< (\alpha \land \beta) \vee \alpha$ then $(\alpha \land \beta) \vee \alpha \leq \alpha \land \beta$, by Remarks~\ref{rem-EE}(2). By (EE2), $\alpha \leq (\alpha \land \beta) \vee \alpha$ and $\alpha \land \beta \leq \beta$, hence, by (EE1), $\alpha \leq \beta$ also in this case.\\

This concludes the proof.
\end{proof} 

\

\subsection{Applications, Connections and Examples}\label{examples}
G\"{a}rdenfors \cite{Gardenfors1988} distinguishes two main strands in the early development of epistemic entrenchment: the \emph{information-theoretic} approach and the \emph{paradigm} approach. On the former, entrenchment is tied to the informational value of sentences: the more informative a sentence is, the more entrenched it should be, since information is valuable and it is rational to minimize information loss. How exactly such value should be measured -- and even which notion of information is relevant -- remains unsettled. Still, the literature offers several proposals, including probabilistic and other ranking-based models \cite{ferme.hansson}. For our purposes, it suffices to assume that, insofar as information loss can be assessed, a sentence's degree of epistemic entrenchment can be identified with the amount of information that would be lost were that sentence to be retracted from the belief state.

In a paraconsistent setting, the information-theoretic paradigm yields a trade-off between the minimality desideratum (i.e. the minimization of information loss) and consistency, given the rational possibility of contradictions (that is, the logical possibility of contradictory yet non-trivial epistemic states). In this sense, consistency becomes an additional constraint on minimal change, which may come at the price of further information loss -- this tension is precisely what we call the the \emph{cost of consistency} \cite{cost}.

The second approach is centered on a structural view of scientific theories: the core of a theory contains a set of fundamental assumptions that are protected from immediate revision, while the extended core ``contains a set of special laws that may be introduced, tested, and also rejected, while the core is kept constant''\cite{Gardenfors1988}. In the approach advanced in \cite{Testa2014,tes.con.rib.2016} and extended in the present paper, this core is identified with the strongly accepted beliefs: sentences that are assigned the formal status of consistency and, together with the tautologies, are treated as \emph{irrevocable} (cf. definition \ref{unrev}).

As it is pointed out by Levi \cite{levi}, epistemic entrenchment is also partly fixed by pragmatic considerations -- such as simplicity, explanatory power, and the subject matter at stake -- and is therefore context-dependent. This flexibility is precisely what allows the same formal apparatus to be applied across different domains. Some concrete applications include changes in legal codes, where contraction and revision correspond, respectively, to \emph{derogation} and \emph{amendment} \cite{am}. A simple illustration of the derogation problem is the following \cite{hilpinen}.

\begin{example}\label{deontic1}
Suppose that a father has commanded that:\\
(a) The children may watch TV only if they eat dinner.\\
(b) The children may eat their dinner only if they do their homework.

A consequence of these norms is that:\\
(c) The children may watch TV only if they do their homework.

Suppose now that the father adds the following permission:\\
(d) Today, the children may watch TV without doing their homework.
\end{example}

Let $K$ be the normative code consisting of (a) and (b), understood as a set of sentences closed under $\rcbr$. Although the example can be naturally read in deontic terms, the relevant conflict arises already at the propositional non-modal level. Accordingly, we may represent (a) and (b) as $T \to D$ and $D \to H$ (where $T$ abbreviates ``the children may watch TV'', $D$ ``the children eat dinner'', and $H$ ``the children do their homework''). Thus, $T\to H\in Cn(\{T\to D, D\to H\})$.

Now, if the father adds $T\wedge\neg H$, it is natural to read it as a \emph{temporary exception} meant to hold \emph{today}, and thus as something that should override the derived restriction $T\to H$ only in that context. A natural way of representing this input is to first \emph{expand} $K$ with $T\wedge \neg H$, thereby allowing an intermediate conflicting state $K'$ where $T\to D$ and $D\to H$ still hold, and thus $H\wedge\neg H\in K'$. This is precisely where epistemic entrenchment provides a guiding structure. Let $\leq_{K'}$ be an epistemic entrenchment preorder associated with $K'$. Intuitively, the context ``today'' is reflected by taking the new input $T\wedge\neg H$ to be highly entrenched, relative to the background commitments supporting $T\to H$. Accordingly, a consolidated, non-contradictory state $K''$ can be obtained as the result of $K'\div(H\wedge\neg H)$ that retracts at least one of $T\to D$ or $D\to H$, as determined by $\leq_{K'}$. This can be regarded as an implicit contextual derogation (for today). However, it is equally clear that, on any other day, $T\to D$ and $D\to H$ have priority over $T\wedge \neg H$; accordingly, the consolidated output $K'\div(H\wedge\neg H)$ derogates the temporary exception $T\wedge\neg H$.

In more realistic cases, such an entrenchment ordering can be understood as capturing \emph{hierarchies of regulation} \cite{am}, i.e.\ priority structures according to which normative conflicts are resolved by (possibly implicit) derogations that respect the hierarchy. At the same time, much work in AI \& Law, defeasible reasoning and argumentation-based models emphasizes that legal reasoning is often carried out under inconsistent and evolving information: competing norms and exception-clauses may coexist at intermediate stages, and the role of priorities is precisely to manage such conflicts rather than to guarantee that they never arise in the first place \cite{Prakken1997,testa-con-mestrado}. This makes it especially natural to treat contextual exceptions via the two-step dynamics highlighted above: first, an \emph{expansion} that registers the case-specific information (thereby allowing the conflict to be marked), and then a \emph{consolidation} (defined as a contraction-driven step that retracts enough commitments to remove the \emph{explicit contradictions} marked at the intermediate stage). In this sense, one of the gains of the present paraconsistent paradigm is to accommodate such temporary contradictions in a controlled way, while still delivering a consolidated output when required.

Furthermore, in many domains, some norms may be treated as non-negotiable (e.g., constitutional or higher-order rules). In our setting, this is naturally captured by the distinguished set of \emph{irrevocable} commitments, i.e.\ norms that are maximally entrenched and must be preserved from derogations -- unless their irrevocability is itself retracted, in the sense of Remark~\ref{rem-RCie}(2).

Another example concerns case-specific exceptions (or \emph{defeaters}) in scientific reasoning. An anomalous datum is often addressed by diagnosing the failure of an auxiliary condition -- rather than by giving up core assumptions. The next example illustrates this in a concrete evidential scenario.

\begin{example}\label{doping1}
Consider an anti-doping agency evaluating a laboratory report. Let:\\
(a) The assay is reliable in the present case (no relevant contamination, cross-reactivity, or chain-of-custody failure). ($R$)\\
(b) If the assay is reliable and the result is positive, then a violation should be concluded. ($(R \wedge P)\to V$)\\
(c) The result is positive. ($P$)

Now suppose the agency receives additional information about this particular case:\\
(d) The athlete used a legal supplement later shown to be contaminated in a way that can trigger this assay. ($C$)\\
(e) If (d) holds, then the reliability assumption (a) fails in this case. ($C\to\neg R$)
\end{example}

Let $K$ be the background state consisting of (a)--(c), understood as a set of sentences closed under $\rcbr$. Thus, $K=Cn(\{R,(R\wedge P)\to V,P\})$, and in particular $V\in K$ (a violation should be concluded). If the agency receives (d) and (e), a natural first step is to consider the expansion
$K'=K+\{C,\,C\to\neg R\}$, from which $\neg R$ is also supported. However, the intermediate conflict $R\wedge\neg R$ does not by itself determine what the agency should conclude: whether a violation is to be concluded should be assessed only after a contraction-driven \emph{consolidation} step, guided by $\le_{K'}$. If the new information $C$ and $C\to\neg R$ is regarded as trustworthy, then $R\le_{K'}\neg R$, so that $R\notin K'\div(R\wedge\neg R)$ and hence $R\wedge P$ is not supported; thus no violation can be concluded.

However, several other scenarios can be considered, and the point of the example is precisely that the outcome is bounded by the entrenchment profile assumed at the intermediate stage. For instance, if $C$ is not regarded as trustworthy, thus $\neg R \le_{K'} R$; accordingly, the consolidation step may preserve $R$ and retract (at least) the defeater information, yielding again support for $R\wedge P$ and hence for $V$ -- unless $(R\wedge P)\to V$ itself is argued to be even less reliable than $C$ (e.g.\ in light of recurrent anomalies), in which case consolidation may instead aim at retracting that policy.

By contrast, if the policy is taken to be a ``hard'' scientific fact -- e.g.\ if $\circ((R\wedge P)\to V)$ holds, in the sense that this conditional belongs to the set of irrevocable (hence maximally entrenched) sentences -- then consolidation cannot proceed by retracting it, and the burden of restoring a consolidated output must fall elsewhere (typically on $R$ or on the defeater side, depending on $\le_{K'}$). In all such cases, what epistemic entrenchment over $\rcbr$ adds is a disciplined way of making these ranking-driven choices explicit.

Overall, the preceding examples bring out two complementary reasons for working with epistemic entrenchment over $\rcbr$. First, our framework treats intermediate, potentially conflictive stages as methodologically central: contextual exceptions, defeaters and competing constraints are naturally registered by expansion, often yielding explicit contradictions, and only then removed (when required) by a contraction-driven consolidation. In this sense, the expansion--consolidation pattern (hence semi-revision) is not a mere artifact of the formalism, but a faithful representation of the dynamics presupposed by our motivating applications.

Second, the algebraic machinery built into $\rcbr$ is not just technical overhead: by guaranteeing
\emph{replacement}, it ensures that both entrenchment and the induced contraction are driven by informational content rather than by syntactic presentation. Informationally equivalent formulations of a belief-representing sentence (a commitment or an exception/defeater) receive the same comparative standing, so that the dynamics is stable under representational variation and the entrenchment preorder is well-defined at the level that matters for applications.

\section{Conclusion}

This paper has proposed a technically grounded and conceptually motivated framework for belief revision within the paraconsistent setting of the \emph{Logics of Formal Inconsistency} (\lfis). By systematizing the main results concerning \cbr\ -- first introduced by Testa, Coniglio, and Ribeiro~\cite{tes.con.rib.2016} -- and advancing its self-extensional extension, \rcbr\ -- which builds on the techniques of \emph{Boolean Algebra with LFIs Operators} (BALFIs), developed by Carnielli, Coniglio, and Fuenmayor~\cite{car.con.fue.2022} -- we have addressed structural limitations in previous \lfi-based approaches to Paraconsistent Belief Revision, particularly the absence of the \emph{replacement property}, as emphasized in~\cite{tfgr}. With this, we have succeeded in formally characterizing epistemic entrenchment within a robust paraconsistent framework.

We would like to stress the use of the Chinese Remainder Theorem (CRT) in Section~\ref{sect:countermodel}. This allowed us to state fundamental algebraic properties of a BALFI model for \rcbr, contributing to a better understanding of the semantic aspects of paraconsistent belief revision systems based on \lfis\ with replacement.

The main contributions of this paper are twofold. From the perspective of paraconsistency -- and \lfis\ in particular -- our results develop the interpretation of paraconsistency in terms of contradictory yet nontrivial epistemic states of rational agents~\cite{Testa2014}. By showing how the acceptance of contradictions enables meaningful and well-founded models of belief dynamics, paraconsistent logics themselves are enriched with a principled interpretation and an intuitive understanding grounded in rational agency. In this context, epistemic entrenchment plays a central role by supporting a refined reading of the formal consistency operator: now understood as expressing an epistemic attitude that grounds strong forms of acceptance and rejection. Within the epistemic entrenchment paradigm, these attitudes are further clarified by the ranking of beliefs -- where strongly accepted beliefs are characterized as irrevocable, alongside tautologies.

From the perspective of the Theory of Belief Revision, our framework expands the theoretical landscape of belief dynamics in non-classical logics~\cite{wassermann,ribeiro}, demonstrating how \lfis\ allow for novel and coherent constructions that broaden the classical AGM framework. These results not only extend the applicability of entrenchment-based approaches, but also shed light on how core principles of rationality are themselves shaped by the logical framework in which they are formulated. In this sense, when traditionally assumed postulates of rational belief change are challenged or redefined within non-classical settings, we come to see that our very notion of rationality is not fixed or logic-independent, but rather deeply intertwined with the logical structures we take for granted.

From a philosophical standpoint, this latter question permeates several directions for future development. One particularly compelling avenue lies in the domain of \emph{Multiagent Belief Change}: how can the beliefs of multiple agents be combined, especially in collective decision-making contexts? This problem is deeply connected to Belief Revision theory, including operations such as \emph{belief merging} and its ties to \emph{Social Choice Theory}, as studied by Konieczny and Pino Pérez~\cite{kpp}, and further explored within the framework of Game Theory by Booth and Meyer~\cite{bm}, who investigated equilibrium-based models of belief merging.

However, such models often face impossibility results -- typically arising from tensions between certain social desiderata and the rationality postulates required to resolve conflicting information. We argue that reexamining these postulates within a non-classical logical framework—such as that offered by \lfis\ -- may provide a promising path forward. In particular, the exploration of the enriched epistemic landscape enabled by \textbf{RCbr}, together with its technically well-founded notion of entrenchment, within the context of the aforementioned applications and theoretical connections, constitutes an important direction of ongoing research.

Another independent yet related line of investigation concerns the study of changes in the strength of beliefs, as explored, for instance, in the so-called improvement operations~\cite{213}. These operations do not necessarily satisfy the Success postulate, although they aim to increase an agent’s entrenchment regarding a given belief-representing sentence. The similarities between such operations and the epistemic acceptance of consistency -- as introduced in the AGM$\circ$ framework -- warrant further investigation. Specifically, revising a belief set in order to accept the consistency of a sentence (rather than the sentence itself) may be understood as a shift in the strength of that sentence’s epistemic status. This connection may offer a more substantial interpretation of \lfis\ within the proposed epistemic setting, enriching the philosophical understanding of formal consistency and its role in rational agency.

\

\noindent{\large \bf Acknowledgments:}
This research was supported by the S\~ao Paulo Research Foundation (FAPESP, Brazil), through the Thematic Project RatioLog \#2020/16353-3. The first author also acknowledges support from the National Council for Scientific and Technological Development (CNPq, Brazil), through the individual research grant \#309830/2023-0. The second author also acknowledges suport from FAPESP through the Visiting Researcher Grant - International \#2022/03862-2.  The third author acknowledges support from FAPESP through the Research Internship Abroad (BEPE) grant \#2017/10836-0.

\end{document}